
\documentclass[12pt]{article}
%

\topmargin -20mm
\textwidth 160mm
\textheight 220mm
\evensidemargin 0mm
\oddsidemargin 0mm
\parskip=\medskipamount

\arraycolsep 2pt



\usepackage{verbatim}
\usepackage{tensor}
\usepackage{amsmath,amssymb,calc, amsthm,bbm, epsfig,psfrag, mathtools}
\usepackage{latexsym,bm,amsfonts}

\usepackage{slashed}

\newtheorem{thm}{Theorem}[section]

\newtheorem{lemma}{Lemma}[section]
\newtheorem{corollary}{Corollary}[section]

\usepackage{graphicx, enumerate}
\newcommand{\ul}[1]{{\underline{#1}}}
 \usepackage[all,cmtip]{xy}
\usepackage[numbers,sort&compress]{natbib}
\usepackage[dvipsnames]{xcolor}
\usepackage{mathrsfs} 
\usepackage{graphicx,tikz,tikz-cd}
\usepackage{quiver}
\usetikzlibrary{shadows}
\usetikzlibrary{shapes.arrows}

\usepackage{braket}
\usepackage{enumitem}
\usepackage{physics}

\allowdisplaybreaks

\newcommand{\Comment}[1]{{}}
\definecolor{darkblue}{rgb}{0.15,0.35,0.55}
\definecolor{reddish}{rgb}{0.65, 0.2, 0.2}
\usepackage[linktocpage=true]{hyperref}
\hypersetup{
colorlinks=true,
citecolor=darkblue,
linkcolor=reddish,
urlcolor=darkblue,
pdfauthor={},
pdftitle={},
pdfsubject={}
}

\newcommand{\im}{\mathrm{im}}

\makeatletter
\renewcommand\section{\@startsection {section}{1}{\z@}%
                                   {-3.5ex \@plus -1ex \@minus -.2ex}
                                   {2.3ex \@plus.2ex}%
                                   {\normalfont\large\bfseries}}
\renewcommand\subsection{\@startsection{subsection}{2}{\z@}%
                                     {-3.25ex\@plus -1ex \@minus -.2ex}%
                                     {1.5ex \@plus .2ex}%
                                     {\normalfont\bfseries}}
\makeatother


\newcommand{\wotimes}{ \, \widehat{\otimes} \, }

\newcommand{\walpha}{ { \widetilde{\alpha} } }
\newcommand{\wbeta}{ { \widetilde{\beta} } }
\newcommand{\wgamma}{ { \widetilde{\gamma} } }
\newcommand{\wdelta}{ { \widetilde{\delta} } }



\newfont{\goth}{ygoth.tfm scaled 1200}                   

 \numberwithin{equation}{section}


\newcommand{\overbar}[1]{\mkern 1.5mu\overline{\mkern-1.5mu#1\mkern-1.5mu}\mkern 1.5mu}

\newcommand{\Honeton}{{\mathcal{H}_{\ul{1} \ldots \ul{n}}}}
\newcommand{\soneton}{{s_{\ul{1} \ldots \ul{n}}}}
\newcommand{\rhooneton}{{\rho_{\ul{1} \ldots \ul{n}}}}

\newcommand{\psioneton}{{\psi_{\ul{1} \ldots \ul{n}}}}


\begin{document}
\begin{titlepage}
\begin{flushright}
\today
\end{flushright}
\vspace{5mm}

\begin{center}
{\Large \bf 
Hodge Theory for Entanglement Cohomology
}
\end{center}

\begin{center}

{\bf
Christian Ferko${}^{a,b}$,
Eashan Iyer${}^c$, 
Kasra Mossayebi${}^{d}$,
and
Gregor Sanfey${}^{e}$
} \\
\vspace{5mm}

\footnotesize{
${}^{a}$
{\it 
Department of Physics, Northeastern University, Boston, MA 02115, USA
}
 \\~\\
${}^{b}$
{\it 
The NSF Institute for Artificial Intelligence
and Fundamental Interactions
}
 \\~\\
 ${}^{c}$
{\it 
Brown University, Providence, RI 02912, USA
}
 \\~\\
${}^{d}$
{\it 
Monarch Park Collegiate Institute, Toronto, Canada}
}
 \\~\\
${}^{e}$
{\it 
The Latymer School, London N9 9TN, United Kingdom
}
\vspace{3mm}
~\\
\texttt{c.ferko@northeastern.edu, 
eashan\_iyer@brown.edu, 
everestkas@gmail.com,
y18grsan@latymer.co.uk
}\\
\vspace{3mm}

\end{center}

\begin{abstract}
\baselineskip=14pt

\noindent We explore and extend the application of homological algebra to describe quantum entanglement, initiated in \cite{Mainiero:2019enr}, focusing on the Hodge-theoretic structure of entanglement cohomology in finite-dimensional quantum systems. We construct analogues of the Hodge star operator, inner product, codifferential, and Laplacian for entanglement $k$-forms. We also prove that such $k$-forms obey versions of the Hodge isomorphism theorem and Hodge decomposition, and that they exhibit Hodge duality. As a corollary, we conclude that the dimensions of the $k$-th and $(n-k)$-th cohomologies coincide for entanglement in $n$-partite pure states, which explains a symmetry property (``Poincar\'e duality'') of the associated Poincar\'e polynomials.

\end{abstract}
\vspace{5mm}

\vfill
\end{titlepage}

\newpage
\renewcommand{\thefootnote}{\arabic{footnote}}
\setcounter{footnote}{0}

\tableofcontents{}
\vspace{1cm}
\bigskip\hrule


\allowdisplaybreaks

\section{Introduction}

One of the important open challenges in quantum information (QI) is to characterize and understand the possible patterns of entanglement in systems consisting of arbitrary tensor products of finite dimensional Hilbert spaces. Although this problem is of great interest for both QI and related fields like theoretical physics and pure mathematics, it appears to be fearsomely complicated in the general case. Indeed, it has even been suggested that this problem may be, in some sense, equivalent to the study of quantum gravity itself.\footnote{An accessible exposition of this idea, along with useful references, can be found in \cite{Wall:2023myf}.}

In this work, we will retreat from the study of entanglement in quantum gravity and quantum field theory, and focus instead on patterns of entanglement in finite-dimensional quantum mechanical systems. In one sense, this represents a substantial restriction in scope, since our analysis will be applicable only to non-relativistic quantum systems and not to systems with infinitely many local degrees of freedom. However, considering finite quantum systems will offer us other advantages, such as the ability to use the machinery of finite-dimensional vector spaces, and the option to generate examples of entangled states numerically and search for additional patterns using a computer.

It is well-known that, in the arena of finite-dimensional quantum systems, the case of bipartite entanglement is much simpler to address than general multipartite entanglement. In fact, for a Hilbert space $\mathcal{H}_{AB} = \mathcal{H}_A \otimes \mathcal{H}_B$ consisting of two tensor product factors, a complete characterization of the entanglement in a state $\ket{\psi_{AB}} \in \mathcal{H}_{AB}$ is offered by the von Neumann entropy of the reduced density matrix. That is, given
\begin{align}
    \rho_{AB} = \ket{\psi_{AB}} \bra{\psi_{AB}} \, , \qquad \rho_A = \tr_B \left( \rho_{AB} \right) \, ,
\end{align}
where $\tr_B$ is the partial trace\footnote{The definition of the partial trace, and of other objects introduced here, will be reviewed shortly.} over the Hilbert space $\mathcal{H}_B$, the quantity
\begin{align}
    S_A = - \tr \left( \rho_A \log \left( \rho_A \right) \right) \, ,
\end{align}
contains all of the data about entanglement between subsystems $A$ and $B$ in $\ket{\psi_{AB}}$. 

However, even in the case of a tripartite Hilbert space, the patterns of possible entanglement are richer, and cannot be straightforwardly reduced to a single numerical quantity which measures the degree of entanglement. A famous illustration of this fact comes from the observation that there are two qualitatively different ``flavors'' of entangled tripartite systems of qubits \cite{PhysRevA.62.062314}, one represented by the GHZ state
\begin{align}\label{GHZ}
    \ket{ \text{GHZ} } = \frac{1}{\sqrt{2}} \left( \ket{000} + \ket{111} \right) \, ,
\end{align}
and one exemplified by the W state,
\begin{align}\label{W}
    \ket{W} = \frac{1}{\sqrt{3}} \left( \ket{001} + \ket{010} + \ket{100} \right) \, ,
\end{align}
and these two types of entangled states are physically inequivalent, in the sense that one cannot be transformed into the other by local quantum operations. Of course, the number and character of the different types of inequivalent entangled states only becomes more intricate as one increases the number and dimension of the subsystems.

It seems likely that new mathematical tools will be required in order to understand the possible patterns of entanglement in general $n$-partite systems. In this work, we will adopt the philosophy of \cite{Mainiero:2019enr}, which advocates for the view that cohomology is a natural framework in which to understand quantum entanglement. Such cohomological techniques are ubiquitous in physics, from the study of classical gauge theory (where a non-trivial de Rham cohomology of the spacetime manifold signals the existence of flat field configurations which are not pure gauge), to BRST quantization (where physical states are identified with cohomology classes associated with an operator $Q_B$), to supersymmetric quantum mechanics (where cohomology classes count ground states in the model). Likewise, the strategy of \cite{Mainiero:2019enr} is to construct a certain cochain complex associated with a given quantum state $\ket{\psi}$ in a multipartite Hilbert space, with the property that the cohomologies of this complex capture data about aspects of entanglement in the state $\ket{\psi}$. This approach is similar in spirit to that of the earlier work \cite{e17053253}, which introduced a notion of \emph{information (co)homology} (see \cite{Vigneaux2017InformationSA} for a complementary discussion and further developments).

An analogy may be helpful to illustrate, heuristically, why cohomology could be an appropriate gadget for this task. In the more familiar setting of de Rham cohomology, the Poincar\'e lemma guarantees that any closed differential form is \emph{locally} exact (i.e. it is exact on a sufficiently small open set). The cohomology of a manifold measures whether one is obstructed from promoting this local exactness to a statement of \emph{global} exactness, by joining together the presentations as an exact form on the various small open patches. Let us view two Hilbert spaces $\mathcal{H}_A$ and $\mathcal{H}_B$ as being in analogy with two small open sets. Any quantum state $\ket{\psi_A} \in \mathcal{H}_A$ or $\ket{\psi_B} \in \mathcal{H}_B$ in a Hilbert space with a single tensor product factor is trivially a product, just as any closed form on a small open set is exact. However, there is an obstruction to realizing a ``global'' state $\ket{\psi_{AB}} \in \mathcal{H}_A \otimes \mathcal{H}_B$ as a product expression $\ket{\psi_A} \otimes \ket{\psi_B}$. It stands to reason that such an obstruction may be characterized by some type of cohomology. But the impossibility of realizing a state $\ket{\psi_{AB}}$ as a tensor product $\ket{\psi_A} \otimes \ket{\psi_B}$ is precisely the statement that this state is entangled.\footnote{A similar analogy can be used to motivate a homological approach to \emph{classical} probability, where one is (in general) obstructed from realizing a probability distribution on several variables as a product of univariate distributions. The application of homology to this problem has been studied in \cite{10.1007/978-3-031-38271-0_25}.
}

In \cite{Mainiero:2019enr}, this intuition was made precise via the definitions of two cochain complexes, which are referred to as the \texttt{GNS} and \texttt{Com} complexes in that work, associated with any given state $\ket{\psi}$.\footnote{See also \cite{Harvey:2020jvu}, where these tools were applied to the study of a state $\ket{0_L}$ which arises in the context of quantum error correcting codes, and \cite{Hamilton:2023klg}, where a different entanglement complex was constructed.} In the present article, we will be chiefly concerned with \texttt{Com}, the commutant complex. Our primary reason for focusing on the latter of these two complexes is that, as we will show, the commutant complex admits a natural analogue of Hodge theory. This allows one to apply various results and intuition associated with the study of harmonic forms on manifolds. For instance, in the case of de Rham cohomology on a closed and oriented $n$-dimensional manifold $\mathcal{M}$, the Hodge theorem gives an isomorphism between the $k$-th cohomology group $H^k ( \mathcal{M} )$ and the space $\mathrm{Harm}^k ( \mathcal{M} )$ of harmonic $k$-forms:
\begin{align}
    H^k ( \mathcal{M} ) \cong \mathrm{Harm}^k ( \mathcal{M} ) \, .
\end{align}
Furthermore, the Hodge star operation $\ast$ gives an isomorphism between the space $\mathrm{Harm}^k ( \mathcal{M} )$ of harmonic $k$-forms and the space $\mathrm{Harm}^{n-k} ( \mathcal{M} )$ of harmonic $(n-k)$ forms:
\begin{align}\label{cong_iso}
    \mathrm{Harm}^k ( \mathcal{M} ) \cong \mathrm{Harm}^{n-k} ( \mathcal{M} ) \, .
\end{align}
Combining these two isomorphisms, one immediately recovers the result that the dimensions of the $k$-th cohomology group and the $(n-k)$-th cohomology group coincide. When phrased in terms of the Betti numbers $b_k$, which measure the dimensions of homology groups $H_k ( \mathcal{M} )$ that are themselves related to the dimensions of cohomology groups by Poincar\'e duality, this is an avatar of the familiar symmetry property $b_k = b_{n-k}$.

The development of analogous Hodge-theoretic observations for entanglement in finite quantum systems will be one of the main focuses of this work. Since a version of the isomorphism (\ref{cong_iso}) exists for the commutant complex \texttt{Com} but not for the GNS complex \texttt{GNS}, we focus on the former, which we will simply call the entanglement complex. Likewise, we refer to the cohomology of this cochain complex as entanglement cohomology.

The structure of this paper is as follows. Section \ref{sec:ent_coho_review} reviews the basic notions of entanglement cohomology from \cite{Mainiero:2019enr} which are relevant for our discussion here, including some small novel observations. In Section \ref{sec:hodge} we build a Hodge theory for entanglement cohomology, prove analogues of the Hodge theorem and Hodge decomposition, and construct a Hodge star operation which demonstrates that the Poincar\'e polynomials associated with the entanglement cohomology of generic pure states are symmetric. 
Section \ref{sec:two_qubit_example} demonstrates this machinery using an explicit comparison of the entanglement cohomologies and Hodge star operations for two example states involving pairs of qubits.
Section \ref{sec:conclusion} summarizes our results and presents directions for future research. Finally, we have relegated an argument concerning a certain compatibility condition for projections to Appendix \ref{app:compatibility}.

\section{Entanglement Cohomology}\label{sec:ent_coho_review}

In this section, we introduce the basic definitions and results on entanglement cohomology which will be useful in the remainder of this paper. Our discussion is primarily a review of \cite{Mainiero:2019enr}, where these ideas were first developed. However, our notation and focus will be a bit different than those of \cite{Mainiero:2019enr}, and we will include some minor new observations. We therefore find it useful to give a fairly substantial review of the machinery of entanglement cohomology, which also serves to make the present work more self-contained. In order to make this paper accessible to a broad audience, we will also review some elementary notions related to entanglement and density matrices; although these ideas are familiar to physicists, this discussion will serve to fix our notation, and may be useful for readers with a more mathematical background.

\subsection{Notation and Basic Definitions}\label{sec:defns}

Throughout this work, we restrict attention to Hilbert spaces which can be realized as finite tensor products of smaller, finite-dimensional Hilbert spaces, where we refer to the latter as subsystems. For instance, in the case of a bipartite Hilbert space, we write
\begin{align}
    \mathcal{H}_{AB} = \mathcal{H}_A \otimes \mathcal{H}_B \, ,
\end{align}
where $\mathcal{H}_A$ and $\mathcal{H}_B$ are subsystems with dimensions that we write as $d_A$ and $d_B$, respectively. If we choose two bases $\ket{a}$ and $\ket{b}$ for $\mathcal{H}_A$ and $\mathcal{H}_B$, a generic state $\ket{\psi_{AB}} \in \mathcal{H}_{AB}$ then admits an expansion
\begin{align}\label{bipartite_expansion}
    \ket{\psi_{AB}} = \sum_{a = 1}^{d_A} \sum_{b=1}^{d_B} C_{ab} \ket{a} \otimes \ket{b} \, ,
\end{align}
where $C_{ab}$ are a set of expansion coefficients.

The above notation is suitable for discussing Hilbert spaces with a small, fixed number of tensor product factors. However, when we consider Hilbert spaces with an arbitrary number of factors, it will be convenient to introduce some additional conventions in order to clarify the role of various indices. For example, if the index $i$ appears in a formula, it may not be immediately clear whether this refers to the $i$-th Hilbert space $\mathcal{H}_i$, or to the $i$-th basis element $\ket{ i }$ for some other Hilbert space $\mathcal{H}_j$ for $j \neq i$.

To avoid this confusion, we adopt the rule that indices which label subsystems will always be decorated with an underline, so that $\ul{i}$ refers to the $i$-th Hilbert space. When choosing indices for such subsystem labels, we will always use lowercase middle Latin letters like $\ul{i}$, $\ul{j}$, $\ul{k}$. A general Hilbert space of the type which we consider in this work is
\begin{align}
    \mathcal{H}_{\ul{1} \ldots \ul{n}} = \prod_{\ul{i} = \ul{1}}^{\ul{n}} \mathcal{H}_{\ul{i}} = \mathcal{H}_{\ul{1}} \otimes \mathcal{H}_{\ul{2}} \otimes \ldots \otimes \mathcal{H}_{\ul{n}} \, .
\end{align}
We choose a set of basis elements $\ket{ \alpha_\ul{i} }$ for each Hilbert space $\mathcal{H}_\ul{i}$, where the index $\alpha_{\ul{i}}$ runs from $1_{\ul{i}}$ up to $d_{\ul{i}}$, which denotes the dimension of the $\ul{i}$-th Hilbert space. For instance, if the third Hilbert space $\mathcal{H}_{\ul{3}}$ is $4$-dimensional, we would write its basis elements as
\begin{align}
    \ket{ 1_{\ul{3 } } } \, , \quad \ket{ 2_{\ul{3 } } } \, , \quad \ket{ 3_{\ul{3 } } } \, , \quad \ket{ 4_{\ul{3 } } } \, ,
\end{align}
which we will sometimes abbreviate using expressions like
\begin{align}\label{separate_increment}
    \left\{ \ket{\gamma_{\ul{3}}} \mid 1 \leq \gamma \leq 4 \right\} \, ,
\end{align}
where in expressions like (\ref{separate_increment}) it is understood that the integer variable $\gamma$ takes values that increment independently of the subscript $\ul{3}$ which labels the subsystem.  We always use Greek letters to index the individual basis elements within a given Hilbert space, in contrast with the underlined Latin letters that specify the Hilbert space.

The generalization of the expansion (\ref{bipartite_expansion}) to an arbitrary $n$-partite Hilbert space is
\begin{align}\label{uncondensed_general_ket}
    \ket{\psi_{\ul{1} \ldots \ul{n}}} = \sum_{\alpha_\ul{n} = 1_{\ul{n}}}^{d_\ul{n}} \ldots \sum_{\alpha_\ul{1} = 1_{\ul{1}}}^{d_\ul{1}} C_{\alpha_\ul{1} \cdots \alpha_\ul{n}} \ket{ \alpha_\ul{1}} \otimes \ldots \otimes \ket{ \alpha_\ul{n} } \, ,
\end{align}
although we will prefer to write such expansions using a condensed sum notation,
\begin{align}\label{condensed_general_ket}
    \ket{\psi} = \sum_{\alpha_\ul{i} = 1_{\ul{i}}}^{d_\ul{i}} C_{\alpha_\ul{1} \cdots \alpha_\ul{n}} \ket{ \alpha_\ul{1}} \otimes \ldots \otimes \ket{ \alpha_\ul{n} } \, ,
\end{align}
where the bounds of summation are understood to imply that we separately sum over each $\alpha_\ul{i}$ for $\ul{i} = \ul{1} , \ldots , \ul{n}$. In expressions like (\ref{uncondensed_general_ket}) and (\ref{condensed_general_ket}), we iterate over values of the \emph{combined} index $\alpha_{\ul{i}}$, rather than fixing $\ul{i}$ and iterating over values of $\alpha$, as in (\ref{separate_increment}). We trust that no confusion will occur from alternately using both of these summation conventions, as we will generally indicate the bounds of summation explicitly, and one can always distinguish which index refers to the subsystem due to the underlining.

We say that $\ket{\psi_{\ul{1} \ldots \ul{n}}} \in \mathcal{H}_{\ul{1} \ldots \ul{n}}$ is a \emph{product state} if it can be written as
\begin{align}
    \ket{\psi_{\ul{1} \ldots \ul{n}}} = \ket{\psi_\ul{1}} \otimes \ket{\psi_\ul{2}} \otimes \ldots \otimes \ket{\psi_\ul{n}} \, ,
\end{align}
for some collection of $\ket{\psi_\ul{i}} \in \mathcal{H}_\ul{i}$. If $\ket{\psi_{\ul{1} \ldots \ul{n}}} $ is not a product state, we say that it is \emph{entangled}.

From a state $\ket{\psi_{\ul{1} \ldots \ul{n}}} \in \Honeton$, one can construct the associated density matrix
\begin{align}\label{pure_state_density_matrix}
    \rho_{\ul{1} \ldots \ul{n}} = \ket{\psi_{\ul{1} \ldots \ul{n}}} \bra{\psi_{\ul{1} \ldots \ul{n}}} \, .
\end{align}
As we have done with states, we also decorate the symbols for density matrices with a collection of subscripts as in (\ref{pure_state_density_matrix}) that list the tensor product factors of the Hilbert space on which the density matrix acts. When we use alphabetical labels, in situations with a small and fixed number of tensor product factors, we will likewise include a collection of capital Latin subscripts on the density matrix, writing (for instance) $\rho_{AB}$ for a density matrix describing a state in $\mathcal{H}_{AB} = \mathcal{H}_A \otimes \mathcal{H}_B$. 

A density matrix of the form (\ref{pure_state_density_matrix}), which is a rank-one projector onto the state $\ket{\psi_{\ul{1} \ldots \ul{n}}}$, will be referred to as a \emph{pure state density matrix}. A generic density matrix $\rho$ can be written as a convex combination of pure state density matrices,
\begin{align}\label{general_density_matrix}
    \rho = \sum_{a = 1}^{N} p_a \ket{\psi_a} \bra{\psi_a} \, ,
\end{align}
where $p_a > 0$ for each $a$ and $\sum_a p_a = 1$. A density matrix which takes the form (\ref{general_density_matrix}) and involves $N > 1$ non-vanishing terms in the sum is referred to as a \emph{mixed state density matrix}. An equivalent definition is that a density matrix $\rho$ is a Hermitian, positive semi-definite operator acting on a Hilbert space $\mathcal{H}$ which satisfies $\Tr ( \rho ) = 1$; if in addition $\Tr ( \rho^2 ) = 1$ then $\rho$ is a pure state, or if $\Tr ( \rho^2 ) < 1$ then $\rho$ is a mixed state.

For any density matrix $\rho$, it will be convenient to define the \emph{support projection} operator $s_\rho$ which is a projector onto the image of $\rho$. In the case where $\rho$ is a pure state density matrix, it is clear that $\rho$ is already a projection operator, so one has $\rho = s_\rho$. More generally, given an arbitrary mixed state density matrix $\rho$, one may diagonalize to write
\begin{align}\label{diagonalized_rho}
    \rho = \sum_{a = 1}^{M} \lambda_a \ket{\chi_a} \bra{\chi_a} \, ,
\end{align}
where now the $\ket{\chi_a}$ are orthonormal, and the corresponding support projection $s_\rho$ is
\begin{align}\label{supp_proj_defn}
    s_\rho = \sum_{a=1}^{M} \ket{\chi_a} \bra{\chi_a} \, ,
\end{align}
which is obtained by simply replacing each coefficient in the sum (\ref{diagonalized_rho}) by unity.

An important construction for operators acting on multipartite Hilbert spaces is the \emph{partial trace}, which maps an operator acting on an $n$-partite Hilbert space to a ``reduced'' operator acting on an $(n-1)$-partite Hilbert space. Consider a linear operator
\begin{align}
    \mathcal{O}_{\ul{1} \ldots \ul{n}} : \mathcal{H}_{\ul{1} \ldots \ul{n}} \to \mathcal{H}_{\ul{1} \ldots \ul{n}} \, ,
\end{align}
which may (for example) be a density matrix. 

Any such operator admits an expansion, in the condensed sum notation of (\ref{condensed_general_ket}), as
\begin{align}\label{general_expansion}
    \mathcal{O}_{\ul{1} \ldots \ul{n}} = \sum_{\alpha_{\ul{i}} = 1_{\ul{i}}}^{d_{\ul{i}}} \sum_{\beta_{\ul{j}} = 1_{\ul{j}}}^{d_{\ul{j}}} \mathcal{O}_{\alpha_\ul{1} \cdots \alpha_\ul{n}}^{\beta_\ul{1} \cdots \beta_\ul{n}} \ket{\alpha_{\ul{1}}} \bra{\beta_{\ul{1}}} \otimes \ldots \otimes \ket{\alpha_{\ul{n}}} \bra{\beta_{\ul{n}}} \, .
\end{align}
Given a fixed subsystem $\underline{k}$, the partial trace of the operator $\mathcal{O}_{\ul{1} \ldots \ul{n}}$ is defined by
\begin{align}\label{ptrace_defn}
    \tr_{\ul{k}} \left( \mathcal{O}_{\ul{1} \ldots \ul{n}} \right) &= \sum_{\alpha_{\ul{i}} = 1_{\ul{i}}}^{d_{\ul{i}}} \sum_{\beta_{\ul{j}} = 1_{\ul{j}}}^{d_{\ul{j}}} \mathcal{O}_{\alpha_\ul{1} \cdots \alpha_\ul{n}}^{\beta_\ul{1} \cdots \beta_\ul{n}} \ket{\alpha_{\ul{1}}} \bra{\beta_{\ul{1}}} \otimes \ldots \tr \Big( | \alpha_{\ul{k}} \rangle \, \bra{\beta_{\ul{k}}} \Big) \ldots \otimes \ket{\alpha_{\ul{n}}} \bra{\beta_{\ul{n}}} \nonumber \\
    &= \sum_{\alpha_{\ul{i}} = 1_{\ul{i}}}^{d_{\ul{i}}} \sum_{\beta_{\ul{j}} = 1_{\ul{j}}}^{d_{\ul{j}}} \mathcal{O}_{\alpha_\ul{1} \cdots \alpha_\ul{n}}^{\beta_\ul{1} \cdots \beta_\ul{n}} \ket{\alpha_{\ul{1}}} \bra{\beta_{\ul{1}}} \otimes \ldots \Big( \sum_{\gamma_{\ul{k}} = 1_{\ul{k}}}^{d_{\ul{k}}}  \braket{\gamma_{\ul{k}}}{\alpha_{\ul{k}}} \braket{\beta_{\ul{k}}}{\gamma_{\ul{k}}} \Big)  \ldots \otimes \ket{\alpha_{\ul{n}}} \bra{\beta_{\ul{n}}} \, .
\end{align}
On the right side of the first line of (\ref{ptrace_defn}), the symbol $\tr$ denotes the ordinary trace over the Hilbert space $\mathcal{H}_{\ul{k}}$, which we have written in terms of a sum over basis elements in the second line. The result of this operation is a new operator $\mathcal{O}_{\ul{1} \ldots \ul{k-1} \, \ul{k+1} \ldots \ul{n}}$ which acts on the $(n-1)$ partite Hilbert space $\mathcal{H}_{\ul{1} \ldots \ul{k-1} \, \ul{k+1} \ldots \ul{n}}$ that no longer includes the subsystem $\ul{k}$. We use the symbol $\slashed{\ul{k}}$ to indicate that a particular index has been omitted from a list, e.g.
\begin{align}\label{no_k_defn}
    \mathcal{H}_{\ul{1} \ldots \slashed{\ul{k}} \ldots \ul{n}} = \mathcal{H}_{\ul{1}} \otimes \ldots \otimes \mathcal{H}_{\ul{k-1}} \otimes \mathcal{H}_{\ul{k+1}} \ldots \otimes \mathcal{H}_{\ul{n}} \, ,
\end{align}
where the right side of (\ref{no_k_defn}) is meant to indicate that the subsystem $\ul{k}$ does not appear in the tensor product. Likewise, we will write $\slashed{\alpha}_{\ul{k}}$ if the basis vectors associated with the subsystem $\ul{k}$ are excluded from a list or sum.

Another way to characterize the partial trace $\tr_{\ul{k}} \left( \mathcal{O}_{\ul{1} \ldots \ul{n}} \right)$ is to give its expansion coefficients (\ref{general_expansion}) in a basis for the reduced Hilbert space. That is, if one writes
\begin{align}
    \tr_{\ul{k}} \left( \mathcal{O}_{\ul{1} \ldots \ul{n}} \right) = \sum_{\alpha_{\ul{i}} = 1_{\ul{i}}}^{d_{\ul{i}}} \sum_{\beta_{\ul{j}} = 1_{\ul{j}}}^{d_{\ul{j}}} \mathcal{O}_{\alpha_\ul{1} \cdots \slashed{\alpha}_{\ul{k}} \cdots \alpha_\ul{n}}^{\beta_\ul{1} \cdots \slashed{\beta}_{\ul{k}} \cdots \beta_\ul{n}} \ket{\alpha_{\ul{1}}} \bra{\beta_{\ul{1}}} \otimes \ldots \otimes | \slashed{\alpha}_{\ul{k}} \rangle \, \langle \slashed{\beta}_{\ul{k}} | \otimes \ldots \otimes \ket{\alpha_{\ul{n}}} \bra{\beta_{\ul{n}}} \, ,
\end{align}
then the coefficients of the reduced operator are simply
\begin{align}
    \mathcal{O}_{\alpha_\ul{1} \cdots \slashed{\alpha}_{\ul{k}} \cdots \alpha_\ul{n}}^{\beta_\ul{1} \cdots \slashed{\beta}_{\ul{k}} \cdots \beta_\ul{n}} = \sum_{\alpha_{\ul{k}} = 1_{\ul{k}}}^{d_{\ul{k}}} \mathcal{O}_{\alpha_\ul{1} \cdots \alpha_{\ul{k}} \cdots \alpha_\ul{n}}^{\beta_\ul{1} \cdots \alpha_{\ul{k}} \cdots \beta_\ul{n}} \, .
\end{align}
Admittedly, our notation for the partial trace is somewhat cumbersome, but this is primarily because we have formulated our definitions in a way which applies to general tensor product Hilbert spaces with any number of factors possessing arbitrary dimensions. As an example, let us briefly see how these formulas simplify in the case of a bipartite Hilbert space $\mathcal{H}_{AB} = \mathcal{H}_A \otimes \mathcal{H}_B$. In this case, a general operator $\mathcal{O}_{AB}$ can be written as a linear combination of basis operators of the form
\begin{align}\label{basis_operator}
    \mathfrak{o}_{AB} = \ket{a_1} \bra{a_2} \otimes \ket{b_1} \bra{b_2} \, ,
\end{align}
where $\ket{a_1}$, $\ket{a_2} \in \mathcal{H}_A$ and $\ket{b_1}$, $\ket{b_2} \in \mathcal{H}_B$. One can define the partial trace by giving its value for operators of the form (\ref{basis_operator}),
\begin{align}
    \tr_B \left( \mathfrak{o}_{AB}  \right) = \ket{a_1} \bra{a_2} \tr \left( \ket{b_1} \bra{b_2} \right) \, ,
\end{align}
and then extending this definition to general operators by linearity. This construction agrees with the general definition (\ref{ptrace_defn}) in this case.

Clearly this procedure of partial tracing can be iterated to sequentially trace out multiple subsystems. Generically, given an operator defined on some multi-partite Hilbert space -- such as a density matrix $\rho_{\ul{1} \ldots \ul{n}}$ which acts on $\mathcal{H}_{\ul{1} \ldots \ul{n}}$ -- we will indicate partial traces of this operator using the same symbol but omitting the indices that correspond to subsystems that have been traced out. For instance,
\begin{align}
    \rho_{\ul{1} \ul{2}} = \tr_{\ul{3}} \left( \rho_{\ul{1} \ul{2} \ul{3}} \right) \, , \qquad \rho_{\ul{1}} = \tr_{\ul{2}} \left(  \rho_{\ul{1} \ul{2}} \right) \, , \qquad \text{etc.}
\end{align}
In our subsequent analysis, it will be important to understand the interplay between a state in a Hilbert space $\mathcal{H}$, which is described by a density matrix $\rho$, and the collection of linear operators $\mathcal{O} : \mathcal{H} \to \mathcal{H}$ acting on that Hilbert space. A useful notion in describing this interplay is the ``restriction'' of the action of such an operator $\mathcal{O}$ to the image of the density matrix. We will denote such a restriction with a vertical bar,
\begin{align}\label{restriction}
    \mathcal{O} \big\vert_{\rho} = s_\rho \mathcal{O} s_\rho \, ,
\end{align}
where $s_\rho$ is defined in (\ref{supp_proj_defn}). The object $\mathcal{O} \big\vert_{\rho}$ is a new linear operator acting on $\mathcal{H}$ which first projects an input state onto the image $\mathrm{im} ( \rho )$, then acts with the operator $\mathcal{O}$, and again projects this output onto the image $\mathrm{im} ( \rho )$. This composition of maps therefore describes the action of this operator $\mathcal{O}$ on $\im ( \rho )$. More explicitly, the restriction $\mathcal{O} \big\vert_{\rho}$ describes the action of $\mathcal{O}$ on the subspace of $\mathcal{H}$ spanned by the kets $\ket{\psi_a}$ which appear in the admixture (\ref{general_density_matrix}) defining the density matrix.

Given a multi-partite Hilbert space $\Honeton$ and a density matrix $\rho_{\ul{1} \ldots \ul{n}}$, it will be convenient to define the space of all such restricted operators $\mathcal{O}_{\ul{1} \ldots \ul{n}} \big\vert_{\rho_{\ul{1} \ldots \ul{n}}}$ associated with this density matrix, which we will denote as
\begin{align}\label{better_commutant_definition}
    \Omega^n \left( \rho_{\ul{1} \ldots \ul{n}} \right) = \left\{ \mathcal{O}_{\ul{1} \ldots \ul{n}} \big\vert_{\rho_{\ul{1} \ldots \ul{n}}} \mid \mathcal{O}_{\ul{1} \ldots \ul{n}} \text{ is a linear operator on } \Honeton \right\} \, .
\end{align}
We have chosen the notation $\Omega^n$ in (\ref{better_commutant_definition}) to remind the reader of the space $\Omega^n ( \mathcal{M} )$ of all differential $n$-forms defined on a manifold $\mathcal{M}$. Note that the symbol $\Omega$ is labeled with an $n$ to emphasize that the relevant Hilbert space is a tensor product of $n$ subsystems.

Given the definition (\ref{better_commutant_definition}), one might wonder whether there is a natural notion of $\Omega^k \left( \rho_{\ul{1} \ldots \ul{n}} \right)$ for $1 \leq k < n$. We will define these spaces of operators as follows. For each $k$, first consider the collection of all ${n \choose k}$ ways of performing sequential partial trace operations on $\rho_{\ul{1} \ldots \ul{n}}$ to obtain a reduced density matrix $\rho_{\ul{i}_1 \ldots \ul{i}_k}$ acting on a $k$-partite Hilbert space. Given each such reduced system, one can consider the space of restricted operators acting on $\mathcal{H}_{\ul{i}_1 \ldots \ul{i}_k}$, in exactly the same way as in equation (\ref{better_commutant_definition}):
\begin{align}\label{reduced_commutant}
    \Omega^k \left( \rho_{\ul{i}_1 \ldots \ul{i}_k} \right) = \left\{ \mathcal{O}_{\ul{i}_1 \ldots \ul{i}_k} \big\vert_{\rho_{\ul{i}_1 \ldots \ul{i}_k}} \mid \mathcal{O}_{\ul{i}_1 \ldots \ul{i}_k} \text{ is a linear operator on } \mathcal{H}_{\ul{i}_1 \ldots \ul{i}_k} \right\} \, .
\end{align}
We then define $\Omega^k \left( \rho_{\ul{1} \ldots \ul{n}} \right)$ as the collection of all tuples of elements of the ${n \choose k}$ spaces (\ref{reduced_commutant}), as the Hilbert spaces $\mathcal{H}_{\ul{i}_1 \ldots \ul{i}_k}$ run over all length-$k$ subsets of the $n$ subsystems. That is,
\begin{align}\label{lower_commutant_definition}
    \Omega^k \left( \rho_{\ul{1} \ldots \ul{n}} \right) = \bigtimes_{ \{ \ul{i}_1 , \ldots , \ul{i}_k \} \, \subset \, \{ \ul{1} , \ldots , \ul{n} \} } \Omega^k \left( \rho_{\ul{i}_1 \ldots \ul{i}_k} \right) \, .
\end{align}
In practice, we will arrange the tuple of elements in any space (\ref{lower_commutant_definition}) using a lexicographic ordering of the subsystems. We will sometimes use underlined capital Latin letters to represent multi-indices, such as $\ul{I} = ( \ul{i}_1 , \ldots , \ul{i}_k )$, where we assume that the entries of such a multi-index are in increasing order, $\ul{i}_1 < \ldots < \ul{i}_n$. Given a multi-index $\ul{I}$, we write $\ul{I}^C$ for its complement in the full set of subsystems $\ul{1} , \ldots , \ul{n}$, again typically in increasing order.

An example might serve to make these definitions clearer. Suppose that we begin with a tripartite Hilbert space $\mathcal{H}_{ABC} = \mathcal{H}_A \otimes \mathcal{H}_B \otimes \mathcal{H}_C$, a state $\ket{\psi_{ABC}} \in \mathcal{H}_{ABC}$, and the associated density matrix $\rho_{ABC} = \ket{\psi_{ABC}} \bra{\psi_{ABC}}$. We use the simplified notation $s_{ABC}$ for the support projection operator $s_{\rho_{ABC}}$ that projects onto the image of $\rho_{ABC}$. One can then assemble the various reduced density matrices
\begin{gather}
    \rho_{AB} = \tr_C \left( \rho_{ABC} \right) \, , \quad \rho_{AC} = \tr_B \left( \rho_{ABC} \right) \, , \quad \rho_{BC} = \tr_A \left( \rho_{ABC} \right) \, , \nonumber \\
    \rho_A = \tr_B \left( \rho_{AB} \right) \, , \quad \rho_B = \tr_A \left( \rho_{AB} \right) \, , \quad \rho_C = \tr_B \left( \rho_{BC} \right) \, , \label{tripartite_reductions}
\end{gather}
along with all of the corresponding support projection operators $s_{AB}$, $s_{AC}$, and so on.

An element of $\Omega^3 ( \rho_{ABC} )$ is the restriction of a linear operator $\mathcal{O}_{ABC}$ acting on $\mathcal{H}_{ABC}$,
\begin{align}
    \omega_3 \in \Omega^3 ( \rho_{ABC} ) \; \implies \; \omega_3 = \mathcal{O}_{ABC} \big\vert_{\rho_{ABC}} = s_{ABC} \mathcal{O}_{ABC} s_{ABC} \, .
\end{align}
Next, a two-form $\omega_2 \in \Omega^2 ( \rho_{ABC} )$ is a tuple of three restricted operators acting on the three subsystems $\mathcal{H}_{AB}$, $\mathcal{H}_{AC}$, and $\mathcal{H}_{BC}$ of size $2$:
\begin{align}
    \omega_2 &= \left( \mathcal{O}_{AB} \big\vert_{\rho_{AB}} \, , \, \mathcal{O}_{AC} \big\vert_{\rho_{AC}} \, , \, \mathcal{O}_{BC} \big\vert_{\rho_{BC}} \right) \nonumber \\
    &= \left( s_{AB} \mathcal{O}_{AB} s_{AB} \, , \, s_{AC} \mathcal{O}_{AC} s_{AC} \, , \, s_{BC} \mathcal{O}_{BC} s_{BC} \right) \, .
\end{align}
Similarly, a one-form $\omega_1 \in \Omega^1 ( \rho_{ABC} )$ is a tuple of three operators acting on the size-one subsystems $\mathcal{H}_A$, $\mathcal{H}_B$, and $\mathcal{H}_C$:
\begin{align}
    \omega_1 &= \left( \mathcal{O}_{A} \big\vert_{\rho_{A}} \, , \, \mathcal{O}_{B} \big\vert_{\rho_{B}} \, , \, \mathcal{O}_{C} \big\vert_{\rho_{C}} \right) \nonumber \\
    &= \left( s_{A} \mathcal{O}_{A} s_{A} \, , \, s_{B} \mathcal{O}_{B} s_{B} \, , \, s_{C} \mathcal{O}_{C} s_{C} \right) \, .
\end{align}
Therefore, we see that the objects of the spaces $\Omega^k \left( \rho_{\ul{i}_1 \ldots \ul{i}_k} \right)$, which we will refer to as entanglement $k$-forms, are quite simple objects. They are simply length-$k$ lists of finite-dimensional matrices, each of the appropriate dimension to act on some reduced Hilbert space, and which have been sandwiched with the suitable support projection operator $s_\rho$.

Finally, let us remind the reader of some elementary definitions from homological algebra. We will be interested in complexes which are formed from a sequence of vector spaces\footnote{Complexes, homology, and cohomology can, of course, be defined for more general algebraic structures such as modules. However, for our purposes in this work, it is sufficient to restrict to vector spaces.} connected by maps which ``square to zero'' in the same way that the exterior derivative $d$ is nilpotent when acting on differential forms. Abstractly, we say that a \emph{cochain complex} is a collection of vector spaces $V_n$ assembled into a structure
\begin{align}\label{cochain_defn}
    \cdots \xrightarrow{d^{-1}} V_0 \xrightarrow{d^{0}} V_1 \xrightarrow{d^{1}} V_2 \xrightarrow{d^{2}} \ldots \, ,
\end{align}
with the property that $d^{n+1} \circ d^n = 0$.

Given such a structure, one defines the $n$-th cohomology group as
\begin{align}\label{cohomology_defn}
    H^n = \frac{\ker ( d^n ) }{\mathrm{im} ( d^{n-1} ) } \, ,
\end{align}
where $\ker$ and $\mathrm{im}$ denote the kernel and image, respectively. 

The dual of this construction is called a \emph{chain complex}, which likewise involves a collection of vector spaces, but which are connected by operators $d_n$ written with lower indices and which act in the opposite direction:
\begin{align}\label{chain_defn}
    \cdots \xleftarrow{d_0} V_0 \xleftarrow{d_{1}} V_1 \xleftarrow{d_2} V_2 \xleftarrow{d_3} \ldots \, ,
\end{align}
which now satisfy $d_n \circ d_{n+1} = 0$. In this case, one speaks of the $n$-th homology group,
\begin{align}
    H_n = \frac{\ker ( d_n ) }{\mathrm{im} ( d_{n+1} ) } \, .
\end{align}

\subsection{Entanglement Complex}

We are now in a position to define the main object of study in this work, which is a certain cochain complex associated to any density matrix in a multi-partite Hilbert space.

In order to keep the presentation pedagogical, we will work up to the general definition of this object in steps. We first describe the entanglement complex in the simplest case of a bipartite Hilbert space $\mathcal{H}_{AB} = \mathcal{H}_A \otimes \mathcal{H}_B$. Next we will present the corresponding definition for a tripartite Hilbert space $\mathcal{H}_{ABC}$, which naturally leads us to address a new subtlety related to ordering. Finally, we then present the general definition of the entanglement complex for an arbitrary multi-partite Hilbert space $\Honeton$.

\subsubsection*{\ul{\it Bipartite complexes}}

Consider a bipartite Hilbert space $\mathcal{H}_{AB}$, a state $\ket{\psi_{AB}}$ with associated density matrix $\rho_{AB} = \ket{\psi_{AB}} \bra{\psi_{AB}}$ and support projection $s_{AB} = s_{\rho_{AB}}$, and the partial traces
\begin{align}
    \rho_A = \tr_B \left( \rho_{AB} \right) \, , \qquad \rho_B = \tr_A \left( \rho_{AB} \right) \, ,
\end{align}
where again we write $s_A = s_{\rho_A}$ and $s_B = s_{\rho_B}$ for the corresponding support projections.

Following the definitions of Section \ref{sec:defns}, we can consider the space $\Omega^2 ( \rho_{AB} )$ whose elements are linear operators $\mathcal{O}_{AB} \big\vert_{\rho_{AB}}$ acting on $\mathcal{H}_{AB}$ that have been left- and right-multiplied by $s_{AB}$. Similarly, the space $\Omega^1 ( \rho_{AB} )$ consists of all ordered pairs $(\mathcal{O}_A, \mathcal{O}_B)$ where $\mathcal{O}_A = \mathcal{O}_A \big\vert_{\rho_A}$ is a linear operator on $\mathcal{H}_A$ restricted to the image of $\rho_A$, and likewise $\mathcal{O}_B = \mathcal{O}_B \big\vert_{\rho_B}$ is an analogous object associated with $\mathcal{H}_B$.

We now propose to assemble these spaces into a cochain complex
\begin{align}\label{bipartite_complex_defn}
    0 \to \mathbb{C} \xrightarrow{d^{0}} \Omega^1 ( \rho_{AB} ) \xrightarrow{d^{1}} \Omega^2 ( \rho_{AB} ) \xrightarrow{d^2} 0 \, ,
\end{align}
which is a structure of the form (\ref{cochain_defn}). Throughout this article, we label the coboundary operators with degrees that are one larger than the corresponding labels in \cite{Mainiero:2019enr}; for instance, what we call $d^0$ in (\ref{bipartite_complex_defn}) is $d^{-1}$ in \cite{Mainiero:2019enr}, and our $d^1$ corresponds to $d^0$ of that work. We will sometimes suppress the indices on the operators, writing the same symbol $d$ for both $d^{0}$ and $d^1$, when it is clear from context which is intended.

To specify the complex (\ref{bipartite_complex_defn}), we must define the action of the coboundary operators $d^0$ and $d^1$ in a way which obeys $d^1 \circ d^0 = 0$, and which (ideally) captures information about entanglement in the state $\ket{\psi_{AB}}$ which defines $\rho_{AB}$. Definitions of the coboundary maps satisfying these two properties were written down in \cite{Mainiero:2019enr}; first one declares
\begin{align}\label{d0_bipartite}
    d^0 \lambda = ( \lambda s_A, \lambda s_B ) \, ,
\end{align}
for any $\lambda \in \mathbb{C}$. Note that $( \lambda s_A, \lambda s_B )$ is a valid element of $\Omega^1 ( \rho_{AB} )$, because both projection operators are automatically restricted to the images of the corresponding density matrices. Said differently, these operators are invariant under the bar operation $\big\vert_{\rho}$, since
\begin{align}
    s_A \big\vert_{\rho_A} = s_A s_A s_A = s_A \, ,
\end{align}
and likewise for $s_B$, since any projection operator $P$ satisfies $P^2 = P$ by definition.

Next, we define the action of $d^1$ on a tuple $(\mathcal{O}_A, \mathcal{O}_B) \in \Omega^1$ by
\begin{align}\label{d1_defn}
    d^1 ( \mathcal{O}_A , \mathcal{O}_B ) = \left( \mathbb{I}_A \otimes \mathcal{O}_B - \mathcal{O}_A \otimes \mathbb{I}_B \right) \big\vert_{\rho_{AB}} \, .
\end{align}
Here we write $\mathbb{I}_A$ for the identity operator acting on $\mathcal{H}_A$ and likewise for $\mathbb{I}_B$. The definition (\ref{d1_defn}) is reminiscent of the formula for the exterior derivative of a one-form $\omega = \omega_x \, dx + \omega_y \, dy$ on a $2d$ manifold with coordinates $(x, y)$, namely
\begin{align}
    d \omega = \left( \partial_x \omega_y - \partial_y \omega_x \right) \, dx \wedge dy \, ,
\end{align}
where the roles of the component functions $\omega_x$, $\omega_y$ are played by the operators $\mathcal{O}_A$ and $\mathcal{O}_B$, and the action of the partial derivatives $\partial_x$, $\partial_y$ is replaced by the action of taking the tensor product with the identity operators $\mathbb{I}_A$ and $\mathbb{I}_B$ in the subsystem Hilbert spaces.

First we should verify that this definition obeys $d d = 0$, where we have suppressed indices.\footnote{We choose to abbreviate $d^{n+1} \circ d^n$ as $d d$ rather than $d^2$, to avoid confusion about whether the superscript ${}^{2}$ indicates squaring or whether it labels the second coboundary operator $d^2$.} Composing the two coboundary maps defined above, one finds
\begin{align}
    d d \lambda = \lambda s_{AB} \left( \mathbb{I}_A \otimes s_B - s_A \otimes \mathbb{I}_B \right) s_{AB} \, , \qquad \lambda \in \mathbb{C} \, .
\end{align}
However, the support projection operators associated with any density matrix $\rho_{AB}$ and its partial traces $\rho_A = \tr_B \left( \rho_{AB} \right)$, $\rho_B = \tr_A \left( \rho_{AB} \right)$ necessarily obey
\begin{align}\label{compatibility_bipartite}
   \left( s_A \otimes s_B \right) s_{AB} = s_{AB} =  s_{AB} \left( s_A \otimes s_B \right) \, ,
\end{align}
which is a condition that is called ``compatibility of supports'' in \cite{Mainiero:2019enr}, where this result is proven using an abstract algebraic argument. We have provided a more pedestrian argument for this compatibility condition in Appendix \ref{app:compatibility} assuming that the parent state on $\mathcal{H}_{AB}$ is pure (although the result holds more generally). Using (\ref{compatibility_bipartite}), one finds
\begin{align}\label{dd_equals_zero_bipartite}
    d d \lambda &= \lambda s_{AB} \left( \mathbb{I}_A \otimes s_B - s_A \otimes \mathbb{I}_B \right) s_{AB} \nonumber \\
    &= \lambda s_{AB} \left( s_A \otimes s_B \right) \left( \mathbb{I}_A \otimes s_B - s_A \otimes \mathbb{I}_B \right) \left( s_A \otimes s_B \right) s_{AB} \nonumber \\
    &= \lambda s_{AB} \left( s_A \otimes s_B^2 - s_A^2 \otimes s_B \right) s_{AB} \nonumber \\
    &= 0 \, ,
\end{align}
where we have used the properties $s_A^2 = s_A$ and $s_B^2 = s_B$ of projectors.

This simple check confirms that the structure (\ref{bipartite_complex_defn}) satisfies the mathematical definition of a cochain complex, and thus one can consistently speak of its cohomology groups (\ref{cohomology_defn}). Specifically, we will speak only of the first cohomology group $H^1 ( \rho_{AB} )$, since we have defined only two non-trivial coboundary operators $d^1$ and $d^0$.

Although the entanglement complex, as we shall call (\ref{bipartite_complex_defn}), is mathematically well-defined, it is not yet clear that it has anything to do with entanglement. In Section \ref{sec:two_qubit_example}, we will consider an extended example comparing the cohomology groups $H^1 ( \rho_{AB} )$ for two examples of density matrices, one corresponding to a product state $\ket{\psi^{(P)}}$ and one corresponding to an EPR pair $\ket{\psi^{(E)}}$, and we will find that the cohomology is trivial for $\ket{\psi^{(P)}}$ and non-trivial for $\ket{\psi^{(E)}}$. This is a special case of a more general theorem which was proven in \cite{Mainiero:2019enr}. To state this theorem, let us first recall that any state $\ket{\psi_{AB}}$ which belongs to a bipartite Hilbert space $\mathcal{H}_{AB} = \mathcal{H}_A \otimes \mathcal{H}_B$ admits a \emph{Schmidt decomposition}
\begin{align}\label{schmidt}
    \ket{\psi_{AB}} = \sum_{\gamma=1}^{\min ( d_A, d_B )} \lambda_\gamma \ket{ \gamma_A } \otimes \ket{\gamma_B} \, ,
\end{align}
for some orthonormal bases $\left\{ \ket{\gamma_A} \right\}$ for $\mathcal{H}_A$ and $\left\{ \ket{\gamma_B} \right\}$ for $\mathcal{H}_B$, and some non-negative real coefficients $\lambda_\gamma$ which obey $\sum_\gamma \lambda_\gamma^2 = 1$. The number $S$ of non-zero coefficients $\lambda_\gamma$ in the sum (\ref{schmidt}) is called the \emph{Schmidt rank} of the state $\ket{\psi_{AB}}$.\footnote{To be clear, given \emph{any} orthonormal bases for $\mathcal{H}_A$ and $\mathcal{H}_B$, the state $\ket{\psi_{AB}}$ trivially admits an expansion of the form (\ref{bipartite_expansion}) in terms of $d_A \cdot d_B$ expansion coefficients $C_{ab}$. In contrast, the content of (\ref{schmidt}) is that there exist \emph{special} orthonormal bases for $\mathcal{H}_A$ and $\mathcal{H}_B$ in terms of which one can achieve a decomposition which involves only a smaller number, $\min ( d_A, d_B )$, of expansion coefficients $\lambda_\gamma$.}

In terms of the Schmidt rank $S$ of the state $\ket{\psi_{AB}}$ determining $\rho_{AB}$, one can show that
\begin{align}
    \dim \left( H^1 \left( \rho_{AB} \right) \right) = 2 \left( S^2 - 1 \right) \, .
\end{align}
This means that the cohomology is trivial if $S = 1$, which corresponds to a product state $\ket{\psi_{AB}} = \lambda_1 \ket{1_A} \otimes \ket{1_B}$, while the cohomology is non-trivial if $S > 1$ non-trivial Schmidt coefficients are required to express the state $\ket{\psi_{AB}}$, which is precisely the statement that $\ket{\psi_{AB}}$ is entangled. Thus the dimension of the entanglement cohomology indeed detects entanglement, at least in the case of a bipartite finite-dimensional Hilbert space $\mathcal{H}_{AB}$.

\subsubsection*{\ul{\it Tripartite complexes}}

Suppose that we are interested in the analogous cohomology for a tripartite Hilbert space,
\begin{align}
    \mathcal{H}_{ABC} = \mathcal{H}_A \otimes \mathcal{H}_B \otimes \mathcal{H}_C \, .
\end{align}
Fix a pure state density matrix $\rho_{ABC}$, along with its support projection $s_{ABC}$ and all of the corresponding objects for the various reduced Hilbert spaces, following the notation in (\ref{tripartite_reductions}) and the paragraph before it. In this setting, we would like to define a complex
\begin{align}\label{tripartite_complex_defn}
    0 \xrightarrow{\; \;} \mathbb{C} \xrightarrow{d^0} \Omega^1 \left( \rho_{ABC} \right) \xrightarrow{d^1} \Omega^2 \left( \rho_{ABC} \right) \xrightarrow{d^2} \Omega^3 \left( \rho_{ABC} \right) \xrightarrow{d^3} 0 \, .
\end{align}
Let us describe each of the coboundary operatators in turn. The first one, $d^0$, follows the recipe which we have seen in equation (\ref{d0_bipartite}) for the bipartite case:
\begin{align}
    d^0 \lambda = ( \lambda s_A , \lambda s_B , \lambda s_C ) \, .
\end{align}
Likewise, given a tuple $( \mathcal{O}_A , \mathcal{O}_B , \mathcal{O}_C )$ of three operators acting on the single-party Hilbert spaces $\mathcal{H}_A$, $\mathcal{H}_B$, and $\mathcal{H}_C$, respectively -- which have been appropriately restricted, in the sense that $\mathcal{O}_A = \mathcal{O}_A \big\vert_{\rho_A}$, $\mathcal{O}_B = \mathcal{O}_B \big\vert_{\rho_B}$, and $\mathcal{O}_C = \mathcal{O}_C \big\vert_{\rho_C}$ -- we define the operator $d^1$ by
\begin{align}\label{tripartite_d1}
    &d^1 ( \mathcal{O}_A , \mathcal{O}_B, \mathcal{O}_c ) \nonumber \\
    &\quad = \left( \left( \mathbb{I}_A \otimes \mathcal{O}_B - \mathcal{O}_A \otimes \mathbb{I}_B \right) \big\vert_{\rho_{AB}} \, , \, \left( \mathbb{I}_A \otimes \mathcal{O}_C - \mathcal{O}_A \otimes \mathbb{I}_C \right) \big\vert_{\rho_{AC}} \, , \, \left( \mathbb{I}_B \otimes \mathcal{O}_C - \mathcal{O}_B \otimes \mathbb{I}_C \right) \big\vert_{\rho_{BC}} \right) \, .
\end{align}
The right side of (\ref{tripartite_d1}) is a tuple of three operators acting on $\mathcal{H}_{AB}$, $\mathcal{H}_{AC}$, and $\mathcal{H}_{BC}$, all suitably restricted to the images of the corresponding reduced density matrices, which therefore defines a valid element of $\Omega^2 ( \rho_{ABC} )$. In particular, let us note how the signs have been chosen in (\ref{tripartite_d1}). Each term appears with a plus sign if the subsystem acted upon by the identity operator in that term appears \emph{before} the subsystem associated with the non-identity operator, in lexicographic ordering. For instance, $\mathbb{I}_B \otimes \mathcal{O}_C$ appears with a positive sign since $B$ precedes $C$ in the ordering $(A, B, C)$. Terms in which this order is reversed enter with a minus sign, such as $\mathcal{O}_B \, \otimes \, \mathbb{I}_C$, which contributes with a negative sign since $C$ does not precede $B$. This is the same pattern as in the bipartite case.

Finally, let us consider $d^2$, which will motivate us to introduce some additional notation. Let us begin with a tuple $\omega_2 \in \Omega^2 \left( \rho_{ABC} \right)$ with components
\begin{align}
    \omega_2 = \left( \mathcal{O}_{AB} , \mathcal{O}_{AC} , \mathcal{O}_{BC} \right) \, ,
\end{align}
where we again implicitly assume $\mathcal{O}_{AB} = \mathcal{O}_{AB} \big\vert_{\rho_{AB}}$ and so on. A natural guess for the action of $d^2$ might include three terms of the schematic form
\begin{align}\label{problematic}
    d^2 \omega_2 \overset{?}{=} \left( \mathbb{I}_A \otimes \mathcal{O}_{BC} + \mathcal{O}_{AC} \otimes \mathbb{I}_B + \mathcal{O}_{AB} \otimes \mathbb{I}_C \right) \big\vert_{\rho_{ABC}} \, ,
\end{align}
up to a choice of sign for each term which we postpone for the moment.

However, the expression (\ref{problematic}) is problematic for a couple of reasons. One issue is the middle term: the object $\mathcal{O}_{AC} \otimes \mathbb{I}_B$ is a linear operator acting on the Hilbert space $\mathcal{H}_A \otimes \mathcal{H}_C \otimes \mathcal{H}_B$, whereas we require an operator acting on these tensor product factors in the different order $\mathcal{H}_A \otimes \mathcal{H}_B \otimes \mathcal{H}_C$. A second problem is that, when we later define a notion of wedge product for entanglement forms, the definition (\ref{problematic}) will not satisfy a conventional Leibniz rule. The reason is that -- in the setting of differential forms -- taking a derivative of a product of components of one-forms, like $\partial_x \left( \omega_y \omega_z \right)$, generates two terms by the product rule. In contrast, taking a tensor product like $\mathbb{I}_A \otimes \left( \mathcal{O}_B \otimes \mathcal{O}_C \right)$, only one term is generated. We will remedy both of these problems in what follows.

We begin with the ordering issue. By analogy with de Rham cohomology for differential forms, whose similarity to entanglement cohomology we have been emphasizing, one might suspect that we should construct an operation similar to the wedge product $\wedge$. Just as the wedge allows us to rearrange products like $dx \wedge dz \wedge dy = - dx \wedge dy \wedge dz$ into a form with a canonical ordering $(x, y, z)$, while keeping track of minus signs that arise due to the signature of permutations, the desired wedge-like tensor product operation should allow us to rearrange expressions like the problematic term $\mathcal{O}_{AC} \otimes \mathbb{I}_B$ appearing in (\ref{problematic}).

Therefore, let us define an operation $\wotimes$ (where the decoration \, $\widehat{ {} }$ \, is meant to remind the reader of the wedge product $\wedge$) which acts by ``shuffling'' a tensor product of operators so that they act on the Hilbert space $\mathcal{H}_{ABC}$ with the tensor product factors in the correct order, while inserting an overall sign corresponding to the signature of the permutation required to implement the ordering.

We can define the action of $\wotimes$ explicitly, on a general multi-partite Hilbert space, as follows. First assume that we have fixed an ordering of the tensor product factors $\ul{1}$, $\ldots$, $\ul{n}$ of $\Honeton$. Consider two operators $\mathcal{A}_{\ul{i}_1 \ldots \ul{i}_a}$ and $\mathcal{B}_{\ul{j}_1 \ldots \ul{j}_b}$ which take tensor product forms,
\begin{align}\label{O_and_tilde_O_factorized}
    \mathcal{A}_{\ul{i}_1 \ldots \ul{i}_a} &= \mathcal{A}_{\ul{i}_1} \otimes \ldots \otimes \mathcal{A}_{\ul{i}_a} \, , \nonumber \\
    \mathcal{B}_{\ul{j}_1 \ldots \ul{j}_b} &= \mathcal{B}_{\ul{j}_1} \otimes \ldots \otimes \mathcal{B}_{\ul{j}_b} \, ,
\end{align}
and which act on Hilbert spaces which include the factors $\ul{i}_1, \ldots,  \ul{i}_a$ and $\ul{j}_1$, $\ldots$, $\ul{j}_b$, respectively. We assemble the collection of all subsystems into a list which is in increasing order; that is, we define a set of subsystem indices
\begin{align}\label{k_indices}
    \ul{k}_1 < \ul{k}_2 <  \ldots < \ul{k}_{a + b} \, ,
\end{align}
such that the list of indices $\ul{k}_1$, $\ldots$, $\ul{k}_{a + b}$ is a permutation $\sigma$ of the list $\ul{i}_1$, $\ldots$,  $\ul{i}_a$, $\ul{j}_1$, $\ldots$, $\ul{j}_b$. We use the symbol $\mathcal{C}_{\ul{x}}$ to designate an operator with is either one of the $\mathcal{A}_{\ul{x}}$ or one of the $\mathcal{B}_{\ul{x}}$, depending on the value of its index:
\begin{align}
    \mathcal{C}_{\ul{k}_x} = \begin{cases} \mathcal{A}_{\ul{k}_x} &\text{if } \ul{k}_x \in \{ \ul{i}_1, \ldots,  \ul{i}_a \} \\ \mathcal{B}_{\ul{k}_x} &\text{if } \ul{k}_x \in \{ \ul{j}_1, \ldots, \ul{j}_b \} \end{cases} \, .
\end{align}
Then assuming that all of the $\ul{k}$ indices are distinct, let
\begin{align}\label{wotimes_defn}
    \mathcal{A}_{\ul{i}_1 \ldots \ul{i}_a} \wotimes \mathcal{B}_{\ul{j}_1 \ldots \ul{j}_b} = \left( - 1 \right)^\sigma \mathcal{C}_{\ul{k}_1} \otimes \ldots \otimes \mathcal{C}_{\ul{k}_{a + b}} \, ,
\end{align}
where $(- 1)^\sigma$ is the signature of the permutation which maps the list of $\ul{i}$ and $\ul{j}$ indices to the list of $\ul{k}$ indices. The definition (\ref{wotimes_defn}) is appropriate if no subsystem index is repeated. Otherwise, we define
\begin{align}\label{wotimes_defn_ZERO}
    \mathcal{A}_{\ul{i}_1 \ldots \ul{i}_a} \wotimes \mathcal{B}_{\ul{j}_1 \ldots \ul{j}_b} = 0 \qquad \text{(if any pair of indices coincide)} \, .
\end{align}
A general operator acting on a multi-partite Hilbert space does not admit a tensor product form (\ref{O_and_tilde_O_factorized}), but is instead a linear combination of such product operators. We extend the definition of $\wotimes$ to such general operators by demanding that this operator be multi-linear, as for the usual tensor product.

In the numerical computations of dimensions of entanglement cohomologies which we performed for this work, e.g. (\ref{example_poincare}), the re-shuffling operation $\wotimes$ was implemented using the \texttt{Permute} method of the \texttt{Qobj} class in QuTiP, the Quantum Toolkit in Python \cite{Johansson:2011jer,Johansson:2012qtx}.

As an example, in the tripartite case, one has
\begin{align}
    \mathcal{O}_A \wotimes \mathcal{O}_C \wotimes \, \mathcal{O}_B = - \mathcal{O}_A \otimes \, \mathcal{O}_B \otimes \mathcal{O}_C \, , 
\end{align}
Using $\wotimes$ resolves the first of the two issues we raised above. The solution to the second issue, as it turns out, is to mimic the action of the product rule by inserting a factor of $2$ by hand, which causes the action of $d^2$ to behave as though it is generating two separate terms. We conclude that the appropriate action of $d^2$ in the tripartite complex is
\begin{align}\label{problematic_fixed}
    d^2 \omega_2 = 2 \left( \mathbb{I}_A \wotimes \mathcal{O}_{BC} + \mathcal{O}_{AC} \wotimes \mathbb{I}_B + \mathcal{O}_{AB} \wotimes \mathbb{I}_C \right) \big\vert_{\rho_{ABC}} \, ,
\end{align}
where we note that $\wotimes$ coincides with the usual tensor product $\otimes$ in the first and third terms of (\ref{problematic_fixed}), but in the second term it rearranges the factors to give an operator that correctly acts on $\mathcal{H}_{ABC}$. 

\subsubsection*{\ul{\it Multipartite complexes}}

Let us now see how the above constructions generalize to an arbitrary multi-partite Hilbert space $\Honeton$. Given a state $\ket{\psi_{\ul{1} \ldots \ul{n}}} \in \Honeton$, and associated density matrix $\rho_{\ul{1} \ldots \ul{n}} = \ket{\psi_{\ul{1} \ldots \ul{n}}} \bra{\psi_{\ul{1} \ldots \ul{n}}}$, the goal is to construct a chain complex of the form
\begin{align}\label{general_npartite_complex}
    0 \xrightarrow{\; \;} \mathbb{C} \xrightarrow{d^0} \Omega^1 \left( \rho_{\ul{1} \ldots \ul{n}} \right) \xrightarrow{d^1} \Omega^2 \left( \rho_{\ul{1} \ldots \ul{n}} \right) \xrightarrow{d^2} \ldots \xrightarrow{d^{n-1}} \Omega^n \left( \rho_{\ul{1} \ldots \ul{n}} \right) \xrightarrow{d^{n}} 0 \, ,
\end{align}
which reduces to (\ref{bipartite_complex_defn}) in the case $n = 2$ and to (\ref{tripartite_complex_defn}) when $n = 3$.

The first coboundary operator will act in the obvious way,
\begin{align}
    d^0 \lambda = \left( \lambda s_{\ul{1}} \, , \, \ldots \, , \, \lambda s_{\ul{n}} \right) \, ,
\end{align}
where $s_{\ul{i}}$ is the support projection onto $\rho_{\ul{i}}$.

Next we define the differential operator $d^m : \Omega^m \left( \rho_{\ul{1} \ldots \ul{n}} \right) \rightarrow \Omega^{m+1} \left( \rho_{\ul{1} \ldots \ul{n}} \right)$ for $1 \leq m < n$. Recall that the elements of $\Omega^{m} \left( \rho_{\ul{1} \ldots \ul{n}} \right)$ are tuples of operators,
\begin{align}
    \omega = \bigtimes_{| \ul{I} | = m } \omega_{\ul{I}} \, ,
\end{align}
acting on reduced Hilbert spaces, where the Cartesian product is taken over all multi-indices associated with length-$m$ subsets of $\{ \ul{1} , \ldots , \ul{n} \}$. Thus the operator $d^m$ should take in a collection of such operators acting on $m$-partite Hilbert spaces, and return a new collection of operators acting on $(m+1)$-partite Hilbert spaces. We do this by defining
\begin{align}\label{general_dm_defn}
    d^m \omega = \bigtimes_{| \ul{I} | = m + 1} \left( m \sum_{\ul{j} \in \ul{I}} \mathbb{I}_{\ul{j}} \wotimes \omega_{\ul{I} \setminus \ul{j}} \right) \, ,
\end{align}
where $\ul{I} \setminus \ul{j}$ denotes the length-$m$ tuple which is obtained by deleting the element $\ul{j}$ from the length-$(m+1)$ tuple $\ul{I}$. As we mentioned before, the factor of $m$ in (\ref{general_dm_defn}) imitates the behavior of the product rule for ordinary derivatives by generating $m$ separate terms.\footnote{We note that this factor does not appear in the definitions of \cite{Mainiero:2019enr}. One can convert between the two conventions for coboundary operators using the dictionary $\left( d^m \right)^{\text{here}} = m \cdot \left( d^{m-1} \right)^{\text{there}}$.}

It is straightforward to check that this coboundary operator correctly reduces to the previously discussed cases for bipartite and tripartite systems. It is also nilpotent, $d^m \circ d^{m-1} = 0$, due to a compatibility of supports argument which generalizes (\ref{compatibility_bipartite}) \cite{Mainiero:2019enr}. One may therefore define the collection of cohomologies $H^m ( \rhooneton )$ according to (\ref{cohomology_defn}).

\subsection{Dimensions of Spaces of Entanglement $k$-forms}

We are primarily interested in the dimensions of \emph{cohomologies}, which are constructed from the various spaces $\Omega^k = \Omega^k ( \rhooneton )$ by considering images and kernels of the coboundary maps (here we suppress the dependence on $\rhooneton$ for brevity). But let us first remark on the dimensions of the vector spaces $\Omega^k$ themselves, before passing to cohomology. In particular, we will compare the dimensions of $\Omega^k$ and $\Omega^{n-k}$ for some fixed $k$ with $1 \leq k < n$. The elements of these two spaces are tuples of the same length, since
\begin{align}
    {n \choose k} = {n \choose n - k} \, .
\end{align}
However, it is not immediately clear that $\dim \left( \Omega^k \right)$ and $\dim \left( \Omega^{n-k} \right)$ coincide as vector spaces; each entry in a tuple defining an element of $\Omega^k$ or $\Omega^{n-k}$ is an operator that has been restricted to the image of a reduced density matrix $\rho_{\ul{I}}$, where $\ul{I}$ is some multi-index. If the image of $\rho_{\ul{I}}$ has dimension $d_{\ul{I}}^{\im}$, then the dimension of the space of linear maps from this image subspace to itself is $\left( d_{\ul{I}}^{\im} \right)^2$. Thus, to compare the dimensions of spaces of entanglement forms, we must account for the dimensions of these image subspaces.

It turns out that a comparison of these images is straightforward, as a consequence of the Schmidt decomposition and the assumption that we always begin with a pure state on the total Hilbert space. We make this comparison precise in the following lemma.

\begin{lemma}\label{dimensions_lemma}
    Consider a density matrix $\rhooneton = \ket{\psioneton} \bra{\psioneton}$ associated with a pure state $\ket{\psioneton} \in \Honeton$. Fix a multi-index $\ul{I}$ which contains a subset of the subsystems $(\ul{1}, \ldots, \ul{n})$ and let $\ul{I}^C$ be the multi-index containing the complement of this subset. Define the reduced density matrices
    \begin{align}
        \rho_{\ul{I}} = \tr_{\ul{I}^C} \left( \rhooneton \right) \, , \qquad \rho_{\ul{I}^C} = \tr_{\ul{I}} \left( \rhooneton \right) \, .
    \end{align}
    Then
    \begin{align}
        \dim \left( \im \left( \rho_{\ul{I}} \right) \right) = \dim \left( \im \left( \rho_{\ul{I}^C} \right) \right) \, .
    \end{align}
\end{lemma}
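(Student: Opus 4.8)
The plan is to reduce the statement to the standard fact that, for a bipartite pure state, the two reduced density matrices share the same nonzero spectrum. First I would regard the $n$-partite Hilbert space as a bipartite one by bundling subsystems: write $\Honeton \cong \mathcal{H}_{\ul{I}} \otimes \mathcal{H}_{\ul{I}^C}$, where $\mathcal{H}_{\ul{I}} = \bigotimes_{\ul{i} \in \ul{I}} \mathcal{H}_{\ul{i}}$ and similarly for $\ul{I}^C$. Here one needs the elementary observation that iterating the partial trace over each subsystem belonging to $\ul{I}^C$ equals the single partial trace over the composite factor $\mathcal{H}_{\ul{I}^C}$, which follows directly from the definition (\ref{ptrace_defn}) together with the multiplicativity of the trace over tensor products. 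Consequently $\rho_{\ul{I}} = \tr_{\ul{I}^C}(\rhooneton)$ is exactly the reduced density matrix obtained by tracing out one factor of the bipartite state, and likewise for $\rho_{\ul{I}^C}$.

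Next I would invoke the Schmidt decomposition (\ref{schmidt}), applied to $\ket{\psioneton}$ relative to the bipartition $\mathcal{H}_{\ul{I}} \otimes \mathcal{H}_{\ul{I}^C}$: there exist orthonormal families $\{ \ket{\gamma_{\ul{I}}} \}$ in $\mathcal{H}_{\ul{I}}$ and $\{ \ket{\gamma_{\ul{I}^C}} \}$ in $\mathcal{H}_{\ul{I}^C}$, and strictly positive reals $\lambda_1, \ldots, \lambda_S$ (the nonzero Schmidt coefficients, so that $S$ is the Schmidt rank) with $\sum_{\gamma} \lambda_\gamma^2 = 1$, such that $\ket{\psioneton} = \sum_{\gamma = 1}^{S} \lambda_\gamma \ket{\gamma_{\ul{I}}} \otimes \ket{\gamma_{\ul{I}^C}}$.

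Then I would compute the two partial traces directly from this expression. Using orthonormality of $\{ \ket{\gamma_{\ul{I}^C}} \}$, one finds $\rho_{\ul{I}} = \tr_{\ul{I}^C}\!\left( \ket{\psioneton} \bra{\psioneton} \right) = \sum_{\gamma = 1}^{S} \lambda_\gamma^2 \ket{\gamma_{\ul{I}}} \bra{\gamma_{\ul{I}}}$, and symmetrically $\rho_{\ul{I}^C} = \sum_{\gamma = 1}^{S} \lambda_\gamma^2 \ket{\gamma_{\ul{I}^C}} \bra{\gamma_{\ul{I}^C}}$. Since the $\ket{\gamma_{\ul{I}}}$ are orthonormal and every $\lambda_\gamma^2$ is strictly positive, $\rho_{\ul{I}}$ has rank exactly $S$, with image spanned by $\ket{1_{\ul{I}}}, \ldots, \ket{S_{\ul{I}}}$; the identical argument yields $\dim\!\left( \im\!\left( \rho_{\ul{I}^C} \right) \right) = S$. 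Hence $\dim\!\left( \im\!\left( \rho_{\ul{I}} \right) \right) = S = \dim\!\left( \im\!\left( \rho_{\ul{I}^C} \right) \right)$, as claimed.

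I do not expect a serious obstacle; the only point requiring mild care is the bookkeeping in the first step, namely verifying that the composite partial trace over $\ul{I}^C$ reproduces the bipartite reduced density matrix and that the Schmidt decomposition of the excerpt, stated for a two-factor Hilbert space, applies verbatim once the subsystems have been regrouped. Both are routine, so the proof is in essence a direct computation.
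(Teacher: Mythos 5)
Your proof is correct and follows essentially the same route as the paper: coarse-grain into the bipartition $\mathcal{H}_{\ul{I}} \otimes \mathcal{H}_{\ul{I}^C}$, apply the Schmidt decomposition, and read off that both reduced density matrices have rank equal to the Schmidt rank $S$. The only cosmetic difference is that the paper keeps the subsystems in their original order and uses $\wotimes$ with a compensating permutation sign, whereas you implicitly regroup the tensor factors; your extra remark about iterated partial traces agreeing with the single composite partial trace is a bookkeeping point the paper leaves tacit.
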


\begin{proof}
    We may view $\ket{\psioneton}$ as a pure state on a bi-partite system $\left( \ul{I} , \ul{I}^C \right)$ by coarse-graining, that is, by considering $\mathcal{H}_{\ul{I}}$ and $\mathcal{H}_{\ul{I}^C}$ to be two subsystems of the total Hilbert space. Then by the Schmidt decomposition theorem, for some Schmidt rank $S$, there exist orthonormal bases $\ket{ \alpha_{\ul{I}} }$ and $\ket{ \alpha_{\ul{I}^C} }$ for the two subsystem Hilbert spaces such that
    \begin{align}
        \ket{ \psioneton } = \sum_{\alpha = 1}^{S} ( - 1 )^\sigma \lambda_\alpha \ket{ \alpha_{\ul{I}} } \wotimes \ket{ \alpha_{\ul{I}^C} } \, .
    \end{align}
    Here $\sigma \in S_{\ul{n}}$ is the permutation which maps $(\ul{I} , \ul{I}^C)$ to the ordered list $(\ul{1} , \ldots, \ul{n})$; this factor is included to cancel a compensating factor that arises from $\wotimes$, which re-shuffles the tensor products and inserts a similar sign, when $\ul{I}$ and $\ul{I}^C$ are ``out of order''.

    The reduced density matrices associated with the two subsystems are
    \begin{align}
        \rho_{\ul{I}} = \sum_{\alpha = 1}^{S} \lambda_\alpha^2 \ket{\alpha_{\ul{I}}} \bra{\alpha_{\ul{I}}} \, , \qquad \rho_{\ul{I}^C} = \sum_{\alpha = 1}^{S} \lambda_\alpha^2 \ket{\alpha_{\ul{I}^C}} \bra{\alpha_{\ul{I}^C}} \, ,
    \end{align}
    where all $\lambda_\alpha$ are assumed to be non-negative. The corresponding support projections are
    \begin{align}
        s_{\ul{I}} = \sum_{\alpha = 1}^{S} \ket{\alpha_{\ul{I}}} \bra{\alpha_{\ul{I}}} \, , \qquad s_{\ul{I}^C} = \sum_{\alpha = 1}^{S} \ket{\alpha_{\ul{I}^C}} \bra{\alpha_{\ul{I}^C}} \, .
    \end{align}
    In particular, by orthonormality of the two bases, both $s_{\ul{I}}$ and $s_{\ul{I}^C}$ are projectors onto dimension-$S$ subspaces. We conclude that
    \begin{align}
        \dim \left( \im \left( \rho_{\ul{I}} \right) \right) = \dim \left( \im \left( \rho_{\ul{I}^C} \right) \right) \, ,
    \end{align}
    as claimed.
\end{proof}

Having established Lemma \ref{dimensions_lemma}, the desired statement about the dimensions of the spaces of entanglement forms follows as a simple corollary.

\begin{corollary}\label{same_dim_corollary}
    In any entanglement complex 
    \begin{align}
        0 \to \mathbb{C} \xrightarrow{d^0} \Omega^1 \left( \rho_{\ul{1} \ldots \ul{n}} \right) \xrightarrow{d^1} \Omega^2 \left( \rho_{\ul{1} \ldots \ul{n}} \right) \xrightarrow{d^2} \ldots \xrightarrow{d^{n-1}} \Omega^n \left( \rho_{\ul{1} \ldots \ul{n}} \right) \xrightarrow{d^n} 0 \, ,
    \end{align}
    associated with a pure state density matrix $\rhooneton$, one has
    \begin{align}
        \dim \left( \Omega^k \left( \rhooneton \right) \right) = \dim \left( \Omega^{n-k} \left( \rhooneton \right) \right) \, ,
    \end{align}
    for each $k = 1 , \ldots , n-1$.
\end{corollary}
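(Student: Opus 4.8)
The plan is to reduce the statement to a term-by-term comparison of the Cartesian product decompositions (\ref{lower_commutant_definition}) of $\Omega^k$ and $\Omega^{n-k}$, using only the elementary dimension count for spaces of restricted operators together with Lemma \ref{dimensions_lemma}.

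First I would invoke (\ref{lower_commutant_definition}) to write $\Omega^k ( \rhooneton )$ as the Cartesian product of the factors $\Omega^k ( \rho_{\ul{I}} )$ over all multi-indices $\ul{I}$ with $| \ul{I} | = k$, so that
\begin{align}
    \dim \left( \Omega^k ( \rhooneton ) \right) = \sum_{| \ul{I} | = k} \dim \left( \Omega^k ( \rho_{\ul{I}} ) \right) \, .
\end{align}
For a single factor I would observe that $\Omega^k ( \rho_{\ul{I}} )$ consists of all operators of the form $s_{\rho_{\ul{I}}} \mathcal{O} s_{\rho_{\ul{I}}}$, and that conjugation by the support projection identifies this space with $\End \big( \im ( \rho_{\ul{I}} ) \big)$; hence its dimension is $\big( d_{\ul{I}}^{\im} \big)^2$, where $d_{\ul{I}}^{\im} = \dim \big( \im ( \rho_{\ul{I}} ) \big)$, exactly as noted in the paragraph preceding Lemma \ref{dimensions_lemma}. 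This yields $\dim \big( \Omega^k ( \rhooneton ) \big) = \sum_{| \ul{I} | = k} \big( d_{\ul{I}}^{\im} \big)^2$, and in the same way $\dim \big( \Omega^{n-k} ( \rhooneton ) \big) = \sum_{| \ul{J} | = n-k} \big( d_{\ul{J}}^{\im} \big)^2$.

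Next I would use that the complement map $\ul{I} \mapsto \ul{I}^C$ is a bijection from the set of length-$k$ multi-indices onto the set of length-$(n-k)$ multi-indices, which lets me reindex the second sum as $\sum_{| \ul{I} | = k} \big( d_{\ul{I}^C}^{\im} \big)^2$. Finally, Lemma \ref{dimensions_lemma} — this is the step that uses purity of $\rhooneton$, through the Schmidt decomposition for the coarse-grained bipartition $( \ul{I} , \ul{I}^C )$ — gives $d_{\ul{I}}^{\im} = d_{\ul{I}^C}^{\im}$ for every $\ul{I}$, so the two sums agree term by term and the corollary follows.

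Every step is elementary once Lemma \ref{dimensions_lemma} is in hand, so I do not expect a genuine obstacle; the only point deserving a little care is the identification $\Omega^k ( \rho_{\ul{I}} ) \cong \End \big( \im ( \rho_{\ul{I}} ) \big)$, which relies on the fact that $\mathcal{O} \mapsto s_{\rho_{\ul{I}}} \mathcal{O} s_{\rho_{\ul{I}}}$ surjects onto the endomorphisms of the image subspace, so that the dimension count $\big( d_{\ul{I}}^{\im} \big)^2$ is exact and no restricted operator is missed.
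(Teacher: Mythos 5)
Your proposal is correct and follows essentially the same route as the paper: both compute $\dim(\Omega^k)$ as the sum over length-$k$ multi-indices $\ul{I}$ of $\dim\big(\End(\im(\rho_{\ul{I}}))\big) = \big(d_{\ul{I}}^{\im}\big)^2$, apply Lemma \ref{dimensions_lemma} termwise via the complement bijection $\ul{I} \mapsto \ul{I}^C$, and identify the resulting sum with $\dim(\Omega^{n-k})$. Your added remark on the surjectivity of $\mathcal{O} \mapsto s_{\rho_{\ul{I}}} \mathcal{O} s_{\rho_{\ul{I}}}$ onto $\End(\im(\rho_{\ul{I}}))$ just makes explicit a dimension count the paper takes for granted.
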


\begin{proof}
    The dimension of $\Omega^k \left( \rhooneton \right)$ is the sum of the dimensions of the spaces of linear operators from $\im \left( \rho_{\ul{I}} \right)$ to $\im \left( \rho_{\ul{I}} \right)$, as $\ul{I}$ runs over all length-$k$ subsets of $(\ul{1}, \ldots, \ul{n})$. By Lemma \ref{dimensions_lemma}, this is equal to the sum of the dimensions of the spaces of linear operators from $\im \left( \rho_{\ul{I}^C} \right)$ to $\im \left( \rho_{\ul{I}^C} \right)$. But the latter sum is precisely the dimension of $\Omega^{n-k} \left( \rhooneton \right)$.
\end{proof}
As any two vector spaces of the same dimension are isomorphic, one concludes that
\begin{align}\label{kform_isomorphism}
    \Omega^k \left( \rhooneton \right) \cong \Omega^{n-k} \left( \rhooneton \right) \, .
\end{align}
However, we reiterate that (\ref{kform_isomorphism}) does not imply that the dimensions of the cohomologies $H^k ( \rhooneton )$ and $H^{n-k} ( \rhooneton )$ are identical, since these spaces are derived from $\Omega^k \left( \rhooneton \right)$ and $\Omega^{n-k} \left( \rhooneton \right)$ by taking appropriate quotients of kernels by images. In order to demonstrate equivalence of the cohomologies, one must find a special isomorphism of the type (\ref{kform_isomorphism}) which enjoys the additional property of interacting with the coboundary operators in a suitable way. This motivates the construction of an explicit Hodge star map between $\Omega^k \left( \rhooneton \right)$ and $\Omega^{n-k} \left( \rhooneton \right)$, which does enjoy such a property.

\section{Hodge Theory for Entanglement}\label{sec:hodge}

As we have just reviewed, entanglement in finite-dimensional quantum systems is characterized by the properties of spaces of objects $\Omega^k ( \rhooneton )$ which resemble differential forms on manifolds. It is natural to wonder whether other constructions for ordinary differential forms also have analogues for entanglement $k$-forms. For instance, an important operation in the study of differential forms on manifolds is the Hodge star:
\begin{align}
    \ast : \Omega^k ( \mathcal{M} ) \to \Omega^{n - k} ( \mathcal{M} ) \, .
\end{align}
In addition to providing a natural duality between $k$-forms and $(n-k)$-forms, the Hodge star allows us to define several related notions, such as an inner product between $k$-forms,
\begin{align}\label{forms_inner_product}
    \langle \alpha , \beta \rangle = \int_{\mathcal{M}} \alpha \wedge \ast \beta\, ,
\end{align}
as well as a codifferential $\delta$, whose action on $k$-forms defined on an $n$-dimensional Riemannian manifold $\mathcal{M}$ is given by
\begin{align}
    \delta = \left( - 1 \right)^{n ( k + 1 ) + 1} \ast d \ast \, .
\end{align}
Whereas the exterior derivative $d$ maps $k$ forms to $(k+1)$-forms, the codifferential $\delta$ sends $k$ forms to $(k-1)$-forms. One can combine these two operations to form the Laplacian,
\begin{align}
    \Delta = \delta d + d \delta \, .
\end{align}
A differential form annihilated by the Laplacian is said to be harmonic; we write $\mathrm{Harm}^k ( \mathcal{M} )$ for the space of harmonic $k$-forms. One of the central results of Hodge theory is that each cohomology class admits a unique harmonic representative, which means that the space of harmonic $k$-forms is isomorphic to the $k$-th cohomology group:
\begin{align}\label{harm_to_coho}
    \mathrm{Harm}^k ( \mathcal{M} ) \cong H^k ( \mathcal{M} ) \, .
\end{align}
A harmonic form is annihilated by both the exterior derivative $d$ and the codifferential $\delta$. Therefore, the Hodge star operation maps harmonic $k$-forms to harmonic $(n-k)$-forms,
\begin{align}\label{hodge_k_to_nmk}
    \ast : \mathrm{Harm}^k ( \mathcal{M} ) \to \mathrm{Harm}^{n-k} ( \mathcal{M} ) \, ,
\end{align}
and in view of the isomorphism (\ref{harm_to_coho}), the map (\ref{hodge_k_to_nmk}) gives rise to an isomorphism between the $k$-th cohomology group and the $(n-k)$-th cohomology group. In particular, the dimensions of $H^k ( \mathcal{M} )$ and $ H^{n-k} ( \mathcal{M} )$ must coincide.

Numerical investigation suggests that the dimensions of entanglement cohomology groups enjoy the same symmetry. It is convenient to collect this numerical data in the form of the Poincar\'e polynomial associated with a state $\rhooneton$, which we define as
\begin{align}\label{poincare_defn}
    P_{n - 2} ( x ) = \sum_{k = 1}^{n - 1} \dim \left( H^k \left( \rhooneton \right) \right) x^{k - 1} \, .
\end{align}
In our conventions (\ref{poincare_defn}) for the Poincar\'e polynomial, the subscript $m$ of $P_m$ does not label the number of subsystems $n$ under consideration, but rather the largest power of $x$ that may appear in the polynomial, which is $m = n - 2$. We have omitted $k = 0$ and $k = n$ from the summation since $H^0$ and $H^n$ are always trivial; the former is obvious, since the kernel of $d^0$ is $0$, and the latter then follows from the Hodge duality we will soon develop. For instance, all bipartite complexes ($n = 2$) have only a single cohomology which may be non-trivial, namely $H^1$, which we have assigned to the constant term of the polynomial. 

Let us illustrate the symmetry of $P_{n-2} ( x )$ by considering the generalized GHZ states
\begin{align}
    \ket{ \mathrm{GHZ}_{n} } = \frac{1}{\sqrt{2}} \left( \underbrace{\ket{0} \otimes \ldots \otimes \ket{0}}_{n \text{ times } } + \underbrace{\ket{1} \otimes \ldots \otimes \ket{1}}_{n \text{ times } } \right) \, ,
\end{align}
The first few instances of the Poincar\'e polynomials $P_{n-2}$ which collect the dimensions of the cohomologies for the states $\rho_{n} = \ket{ \mathrm{GHZ}_{n} } \bra{ \mathrm{GHZ}_{n} }$, are
\begin{align}\label{example_poincare}
    P_{0} &= 6 \, , \nonumber \\
    P_{1} &= 7 + 7 x \, , \nonumber \\
    P_{2} &= 9 + 12 x + 9 x^2 \, , \nonumber \\
    P_{3} &= 11 + 20 x + 20 x^2 + 11 x^3 \, , \nonumber \\
    P_{4} &= 13 + 30 x + 40 x^2 + 30 x^3 + 13 x^4 \, .
\end{align}
In all of the examples (\ref{example_poincare}), the Poincar\'e polynomials are symmetric, in the sense that in each $P_{n-2}$ the coefficient of the term $x^k$ agrees with the coefficient of the term $x^{n-k}$ for all $k$. This pattern persists in all cases which we have investigated numerically.

The goal of this section is to explain this symmetry property by developing an analogue of the isomorphism (\ref{hodge_k_to_nmk}) for entanglement cohomology. Let us point out that, despite the fact that such an isomorphism exists for de Rham cohomology, it is not guaranteed \emph{a priori} that a corresponding isomorphism must exist for entanglement cohomology. As a counter-example, we recall that the original work \cite{Mainiero:2019enr} actually proposed \emph{two} notions of entanglement cohomology: the one which we focus on in this work, which was referred to as the commutant complex in \cite{Mainiero:2019enr}, is based on the restriction defined in (\ref{restriction}), while a variant of this construction called the GNS complex is based on a different restriction
\begin{align}\label{gns_restriction}
    \mathcal{O} \big\|_{\rho} = \mathcal{O} s_\rho \, ,
\end{align}
which multiplies an operator by a support projection on the right but not on the left. The Poincar\'e polynomials associated with the cohomology defined with this alternate restriction (\ref{gns_restriction}) are \emph{not} symmetric, and there is no analogue of a Hodge star operation in this setting. This makes it clear that a version of (\ref{hodge_k_to_nmk}) is not automatic, and that the proof of any Hodge theorem for entanglement cohomology must rely upon the specific structure of the restriction map (\ref{restriction}) which we have used to define our cochain complex.

Let us make one comment on terminology. The analogue of the symmetry of the Poincar\'e polynomials for entanglement cohomology mentioned above, but in the case of the Betti numbers of closed orientable $n$-dimensional manifolds, was first stated by Poincar\'e himself in 1893. In this setting, the statement is simply that the $k$-th and $(n-k)$-th Betti numbers of such a manifold are equal:
\begin{align}\label{symmetry_bettis}
    b_k = b_{n - k} \, .
\end{align}
Some authors refer to a symmetry of the form (\ref{symmetry_bettis}) is as \emph{Poincar\'e duality}. However, in this article we will take a slightly different perspective: we reserve the term ``Poincar\'e duality'' for an equivalence between homology groups and cohomology groups:
\begin{align}
    H^k ( \mathcal{M} ) \cong H_{n - k} ( \mathcal{M} ) \, .
\end{align}
We will also adopt the convention that Betti numbers measure the dimension of \emph{homology} groups, rather than cohomology groups; in most familiar examples, this distinction is immaterial, but there exist spaces for which the dimensions of $H^k$ and $H_k$ disagree. Therefore, using our terminology, it would be inappropriate to refer to the symmetry exhibited by the examples (\ref{example_poincare}) as ``Poincar\'e duality'' as the symmetry involves only the dimensions of \emph{cohomologies}.\footnote{We do not discuss homologies in this paper, although the homology dual to the entanglement cohomology studied here was presented in \cite{Mainiero:2019enr}, and the results of that work imply $\dim ( H_k ) = \dim ( H^k )$.} Because we speak only of cohomology, we will instead refer to this symmetry as ``Hodge duality''.

We begin our exploration of Hodge theory for entanglement by constructing a Hodge star operation which will appear in the definition of an inner product, similar to (\ref{forms_inner_product}), between entanglement $k$-forms. This operation gives rise to an inner product on entanglement $k$-forms, which will illustrate why the restriction (\ref{restriction}) is preferred over (\ref{gns_restriction}).

\subsection{Construction of Hodge Star}\label{sec:hodge_star}

The key element in our development of Hodge theory for entanglement $k$-forms is the construction of a Hodge star map $\ast : \Omega^k ( \rhooneton ) \to \Omega^{n-k} ( \rhooneton )$, which is an example of an isomorphism between these vector spaces whose existence is guaranteed by the argument around equation (\ref{kform_isomorphism}). This construction can be made explicit using a Schmidt decomposition argument similar that of Lemma \ref{dimensions_lemma}. As in that setting, let $\rhooneton = \ket{\psioneton} \bra{\psioneton}$, consider an entanglement $k$-form $\omega \in \Omega^k ( \rhooneton )$, and focus on one component $\alpha_{\ul{I}}$ for a length-$k$ multi-index $\ul{I}$. We perform a Schmidt decomposition
\begin{align}
    \ket{ \psioneton } = \sum_{\alpha = 1}^{S} \lambda_\alpha \ket{ \alpha_{\ul{I}} } \otimes \ket{ \alpha_{\ul{I}^C} } \, ,
\end{align}
so that the support projections associated with the reduced density matrices are
\begin{align}
    s_{\ul{I}} = \sum_{\alpha = 1}^{S} \ket{\alpha_{\ul{I}}} \bra{\alpha_{\ul{I}}} \, , \qquad s_{\ul{I}^C} = \sum_{\alpha = 1}^{S} \ket{\alpha_{\ul{I}^C}} \bra{\alpha_{\ul{I}^C}} \, .
\end{align}
Then $\omega_{\ul{I}}$ is a linear operator on the $S$-dimensional vector space spanned by the $\ket{\alpha_{\ul{I}} }$,
\begin{align}
    \omega_{\ul{I}} = \sum_{\alpha, \beta = 1}^{S} \left( \omega_{\ul{I}} \right)_{\alpha \beta} \ket{ \alpha_{\ul{I}}} \bra{\beta_{\ul{I}}} \, .
\end{align}
We define a corresponding linear operator acting on the $S$-dimensional vector space spanned by the orthonormal basis vectors $\ket{\alpha_{\ul{I}^C}}$ by
\begin{align}\label{hodge_defn}
    \left( \ast \omega \right)_{\ul{I}^C} = \sum_{\alpha, \beta = 1}^{S} \left( - 1 \right)^\sigma \left( \omega_{\ul{I}} \right)_{\alpha \beta}^\ast \ket{ \alpha_{\ul{I}^C}} \bra{\beta_{\ul{I}^C}} \, ,
\end{align}
where $\sigma \in S_{\ul{n}}$ is the permutation which sends the ordered tuple $\ul{I} \cup \ul{I}^C$ to $( \ul{1} , \ldots , \ul{n} )$, and $\left( \omega_{\ul{I}} \right)_{\alpha \beta}^\ast$ is the complex conjugate of the matrix element $\left( \omega_{\ul{I}} \right)_{\alpha \beta}$. Collecting the operators (\ref{hodge_defn}) for all multi-indices $\ul{I}$ defines an entanglement $(n-k)$ form $\ast \omega$.

Let us now explain the reason for taking the complex-conjugate of the elements of $\omega$, which is related to the observation that the Schmidt decomposition is not unique. In particular, one is free to rotate the bases $\ket{ \alpha_{\ul{I} } }$ and $\ket{\alpha_{\ul{I}^C}}$ by compensating phases in a way which leaves the decomposition unchanged. For instance, one may redefine
\begin{align}
    \ket{ \alpha_{\ul{I} } } \to \ket{ \widetilde{\alpha}_{\ul{I} } }  = e^{i \theta_\alpha} \ket{ \alpha_{\ul{I} } } \, , \qquad \ket{\alpha_{\ul{I}^C}} \to \ket{\widetilde{\alpha}_{\ul{I}^C}} = e^{- i \theta_\alpha} \ket{\alpha_{\ul{I}^C}} \, .
\end{align}
Under such a phase rotation of the bases, we have
\begin{align}
    \omega_{\ul{I}} = \sum_{\alpha, \beta = 1}^{S} \left( \omega_{\ul{I}} \right)_{\alpha \beta} e^{- i \left( \theta_\alpha - \theta_\beta \right) }  \ket{ \widetilde{\alpha}_{\ul{I}}} \langle \widetilde{\beta}_{\ul{I}} \vert \, ,
\end{align}
so the matrix elements of $\omega_{\ul{I}}$ in the rotated basis are
\begin{align}
    \left( \widetilde{\omega}_{\ul{I}} \right)_{\alpha \beta} = \left( \omega_{\ul{I}} \right)_{\alpha \beta} e^{- i \left( \theta_\alpha - \theta_\beta \right) } \, .
\end{align}
Taking the Hodge dual in the new rotated basis, using the definition (\ref{hodge_defn}) which includes the complex conjugation, gives
\begin{align}\label{hodge_defn_rotated}
    \left( \ast \omega \right)_{\ul{I}^C} &= \sum_{\alpha, \beta = 1}^{S} \left( - 1 \right)^\sigma \left( \widetilde{\omega}_{\ul{I}} \right)_{\alpha \beta}^\ast \ket{ \widetilde{\alpha}_{\ul{I}^C}} \big\langle \, \widetilde{\beta}_{\ul{I}^C} \big\vert \nonumber \\
    &= \sum_{\alpha, \beta = 1}^{S} \left( - 1 \right)^\sigma \left( \omega_{\ul{I}} \right)_{\alpha \beta}^\ast e^{i \left( \theta_\alpha - \theta_\beta \right) } \ket{ \widetilde{\alpha}_{\ul{I}^C}} \big\langle \, \widetilde{\beta}_{\ul{I}^C} \big\vert \nonumber \\
    &= \sum_{\alpha, \beta = 1}^{S} \left( - 1 \right)^\sigma \left( \omega_{\ul{I}} \right)_{\alpha \beta}^\ast \ket{ \alpha_{\ul{I}^C}} \bra{\beta_{\ul{I}^C}} \, .
\end{align}
The last line of (\ref{hodge_defn_rotated}) is identical to the expression (\ref{hodge_defn}) using the original basis. Therefore, by virtue of performing the complex conjugation of the matrix elements $\left( \omega_{\ul{I}} \right)_{\alpha \beta}$, our definition of the Hodge star is invariant under the phase ambiguity of the Schmidt bases.

Finally, given this definition of the Hodge star, one can show that
\begin{align}
    \ast \ast = ( -1 )^{k ( n - k )} \, ,
\end{align}
when acting on elements of $\Omega^k ( \rhooneton )$.

\subsection{Inner Product for Entanglement $k$-forms}

Fix a density matrix $\rhooneton$ and let $\omega, \eta \in \Omega^k ( \rhooneton )$. We would like to follow the recipe (\ref{forms_inner_product}) for defining an inner product between differential $k$-forms in order to find an analogous inner product $\langle \omega , \eta \rangle$. As a first step, we define a wedge product on entanglement forms
\begin{align}
    \omega \wedge \eta = \sum_{\ul{I} , \ul{J}} \left( \omega_{\ul{I}} \wotimes \eta_{\ul{J}} \right) \big\vert_{\rho_{\ul{I} \cup \ul{J} }} \, ,
\end{align}
where the sum runs over all multi-indices labeling the components of $\omega$ and $\eta$, and in each term we project onto the image of the density matrix associated to the Hilbert space which includes all of the subsystems in both $\ul{I}$ and $\ul{J}$.

Note that, according to our definition of $\wotimes$, all terms in this wedge product which involve pairs of multi-indices that share a common index will vanish, as with ordinary differential forms on manifolds. As a simple example, given two entanglement $1$-forms $\omega = ( \omega_A , \omega_B )$ and $\eta = ( \eta_A , \eta_B )$ on a bipartite Hilbert space $\mathcal{H}_{AB}$, one finds
\begin{align}
    \omega \wedge \eta = \left( \omega_A \otimes \eta_B - \eta_A \otimes \omega_B \right) \big\vert_{\rho_{AB}} \, .
\end{align}
It is straightforward to check that this operation is associative,
\begin{align}
    \omega \wedge \left( \eta \wedge \xi \right) = \left( \omega \wedge \eta \right) \wedge \xi \, ,
\end{align}
and by virtue of the factor of $m$ which we included in the definition (\ref{general_dm_defn}) of the differential, it satisfies the Leibniz formula
\begin{align}
    d \left( \omega \wedge \eta \right) = ( d \omega ) \wedge \eta + ( - 1 )^p \omega \wedge ( d \eta ) \, ,
\end{align}
where $\omega$ is an entanglement $p$-form and $\eta$ is an entanglement $q$-form. In particular, this endows the spaces $\Omega^k ( \rhooneton )$ with the structure of a differential graded algebra.\footnote{We are grateful to Tom Mainiero for helpful discussions on this point.}

Given this wedge product, we define an inner product on entanglement $k$-forms by
\begin{align}\label{inner_product_to_hodge}
    \langle \omega, \eta \rangle = \tr \left( \left( \omega \wedge \ast \eta \right) \big\vert_{\rhooneton} \right) \, ,
\end{align}
where the trace in (\ref{inner_product_to_hodge}) is taken in the full Hilbert space $\Honeton$. Since the definition (\ref{hodge_defn}) of the Hodge star involves complex conjugation of the matrix elements, this inner product is linear in the first argument and antilinear in the second argument,
\begin{align}
    \langle z \omega, w \eta \rangle = z \overbar{w} \langle \omega , \eta \rangle \, ,
\end{align}
which is the opposite of the usual convention for a sesquilinear form. However, this does not substantially affect any of its properties; one could instead define the inner product as the complex conjugate of (\ref{inner_product_to_hodge}), which would be antilinear in the first argument and linear in the second, without changing any of our conclusions.

To elucidate the properties of this inner product, it will be useful to develop an explicit formula for it, again relying on the Schmidt decomposition. Recall that $\omega$ and $\eta$ are tuples of operators acting on subsystems of the total Hilbert space $\Honeton$ consisting of $k$ tensor product factors, which we write schematically as 
\begin{align}
    \omega = \left( \omega_{ \underline{I}_1 } , \ldots , \omega_{ \underline{I}_N } \right) \, , \qquad \eta = \left( \eta_{ \underline{I}_1 } , \ldots , \eta_{ \underline{I}_N } \right) \, ,
\end{align}
where $N = {n \choose k}$, and we assume that each component of $\omega$ and $\eta$ are suitably restricted:
\begin{align}
    \omega_{\ul{I}} = \left( \omega_{\ul{I}} \right) \big\vert_{\rho_{\ul{I}}} \, , \qquad \eta_{\ul{I}} = \left( \eta_{\ul{I}} \right) \big\vert_{\rho_{\ul{I}}} \, .
\end{align}
As $\omega$ is an entanglement $k$-form and $\eta$ is an entanglement $(n - k)$-form, the only non-vanishing contributions to their wedge product come from pairing a component $(\ast \omega)_{\ul{I}^C}$ with a complementary component $\eta_{\ul{I}}$ for some multi-index $\ul{I}$. We will compute the contribution from each complementary pair separately, and then add the results. For a given partition of the subsystems $\ul{1} , \ldots , \ul{n}$ into $\ul{I}$ and $\ul{I}^C$, we perform a Schmidt decomposition of the pure state $\ket{\psioneton}$ defining the density matrix $\rhooneton$ on the total Hilbert space as
\begin{align}
    \ket{\psioneton} = \sum_{\alpha = 1}^{S} ( - 1 )^\sigma \lambda_\alpha \ket{\alpha_{\ul{I}}} \wotimes \ket{\alpha_{\ul{I}^C}} \, .
\end{align}
Again $\sigma$ is the permutation which maps the ordered list $(\ul{I} , \ul{I}^C)$ to $\ul{1} , \ldots , \ul{n}$, which we include only to compensate the corresponding factor arising from $\wotimes$, which puts the tensor product of basis kets ``back in order'' to define a valid state on $\Honeton$.

The density matrix $\rhooneton$ is a rank-one projector, as it is a pure state density matrix, so it is equal to its own support projection,
\begin{align}\label{schmidt_for_inner_product}
    \soneton = \rhooneton = \sum_{\alpha, \beta = 1}^{S} \lambda_\alpha \lambda_\beta \left( \ket{\alpha_{\ul{I}}} \wotimes \ket{\alpha_{\ul{I}^C}} \right) \left( \bra{\beta_{\ul{I}}} \wotimes \bra{\beta_{\ul{I}^C}} \right) \, ,
\end{align}
while the reduced density matrices and their support projections are
\begin{gather}
    \rho_{\ul{I}} = \sum_{\alpha = 1}^{S} \lambda_\alpha^2 \ket{\alpha_{\ul{I}}} \bra{\alpha_{\ul{I}}} \, , \qquad s_{\ul{I}} = \sum_{\alpha = 1}^{S}  \ket{\alpha_{\ul{I}}} \bra{\alpha_{\ul{I}}} \, , \nonumber \\
    \rho_{\ul{I}^C} = \sum_{\alpha = 1}^{S} \lambda_\alpha^2 \ket{\alpha_{\ul{I}^C}} \bra{\alpha_{\ul{I}^C}} \, , \qquad s_{\ul{I}^C} = \sum_{\alpha = 1}^{S}  \ket{\alpha_{\ul{I}^C}} \bra{\alpha_{\ul{I}^C}} \, .
\end{gather}
Now we consider the combination
\begin{align}\label{omega_wedge_ast_eta}
    \omega \wedge \ast \eta = \sum_{\ul{I}} \sum_{\alpha, \beta, \gamma , \delta = 1}^{S} \left( \left( \omega_{\ul{I}} \right)_{\alpha \beta} \ket{\alpha_{\ul{I}}} \bra{\beta_{\ul{I}}} \right) \wotimes \left(  \left( \eta^\ast_{\ul{I}} \right)_{\gamma \delta} \ket{ \gamma_{\ul{I}^C } } \bra{\delta_{\ul{I}^C}} \right) \, .
\end{align}
We must now project the combination (\ref{omega_wedge_ast_eta}) onto the image of the total density matrix $\rhooneton$. This involves left- and right-multiplying by the support projection $\soneton$, which gives
\begin{align}
    \omega \wedge \ast \eta \big\vert_{\rhooneton} &= \soneton \left( \omega \wedge \ast \eta \right) \soneton \nonumber \\
    &= \left( \sum_{\walpha, \wbeta = 1}^{S} \lambda_\walpha \lambda_\wbeta \left( \ket{\walpha_{\ul{I}}} \wotimes \ket{\walpha_{\ul{I}^C}} \right) \left( \big\langle \, \wbeta_{\ul{I}} \, \big\vert \wotimes \big\langle \, \wbeta_{\ul{I}^C} \, \big\vert \right) \right) \nonumber \\
    &\qquad \cdot \left( \sum_{\ul{I}} \sum_{\alpha, \beta, \gamma , \delta = 1}^{S} \left( \left( \omega_{\ul{I}} \right)_{\alpha \beta} \ket{\alpha_{\ul{I}}} \bra{\beta_{\ul{I}}} \right) \wotimes \left(  \left( \eta^\ast_{\ul{I}} \right)_{\gamma \delta} \ket{ \gamma_{\ul{I}^C } } \bra{\delta_{\ul{I}^C}} \right)  \right) \nonumber \\
    &\qquad \cdot \left( \sum_{\wgamma , \wdelta = 1}^{S} \lambda_\wgamma \lambda_\wdelta \left( \ket{\wgamma_{\ul{I}}} \wotimes \ket{\wgamma_{\ul{I}^C}} \right) \left( \big\langle \, \wdelta_{\ul{I}} \big\vert \wotimes \big\langle \, \wdelta_{\ul{I}^C} \, \big\vert \right) \right) \nonumber \\
    &= \sum_{\ul{I}} \sum_{\walpha, \wbeta = 1}^{S} \sum_{\alpha, \beta, \gamma , \delta = 1}^{S} \sum_{\wgamma , \wdelta = 1}^{S} \Bigg( \lambda_\walpha \lambda_\wbeta  \lambda_\wgamma \lambda_\wdelta \left( \omega_{\ul{I}} \right)_{\alpha \beta} \left( \eta^\ast_{\ul{I}} \right)_{\gamma \delta} \delta_{\widetilde{\beta} \alpha} \delta_{\widetilde{\beta} \gamma} \delta_{\beta \wgamma} \delta_{\delta \wgamma}  \nonumber \\
    &\hspace{140pt} \cdot  \left( \ket{\walpha_{\ul{I}}} \wotimes \ket{\walpha_{\ul{I}^C}} \right) \left( \big\langle \, \wdelta_{\ul{I}} \big\vert \wotimes \big\langle \, \wdelta_{\ul{I}^C} \, \big\vert \right) \Bigg) \nonumber \\
    &= \sum_{\ul{I}} \soneton \lambda_{\wbeta} \lambda_{\wgamma} \left( \left( \omega_{\ul{I}} \right)_{\wbeta \wgamma} \left( \eta_{\ul{I}^\ast} \right)_{\wbeta \wgamma} \right) \, ,
\end{align}
where in the last step we have recognized the definition of the support projection $\soneton$. Relabeling indices for clarity, and evaluating $\tr ( \soneton ) = 1$, we conclude
\begin{align}\label{inner_product_components}
    \langle \omega, \eta \rangle = \tr \left( \left( \omega \wedge \ast \eta \right) \big\vert_{\rhooneton} \right) = \sum_{\ul{I}} \sum_{\alpha, \beta = 1}^{S} \lambda_\alpha \lambda_\beta \left( \omega_{\ul{I}} \right)_{\alpha \beta} \left( \eta^\ast_{\ul{I}} \right)_{\alpha \beta} \, ,
\end{align}
where $\left( \omega_{\ul{I}} \right)_{\alpha \beta}$ and $\left( \eta_{\ul{I}} \right)_{\alpha \beta}$ are the matrix elements of $\omega_{\ul{I}}$ and $\eta_{\ul{I}}$, respectively, in the Schmidt basis $\ket{\alpha_{\ul{I}}}$ associated with the subsystem $\ul{I}$, and $\lambda_\alpha$ are the corresponding Schmidt coefficients, according to the decomposition (\ref{schmidt_for_inner_product}).

It is instructive to study the special case of this inner product when all Schmidt coefficients are equal, $\lambda_\alpha = \lambda$, which is called a \emph{flat entanglement spectrum}. Then
\begin{align}\label{inner_product_same_lambdas}
    \langle \omega, \eta \rangle &= \sum_{\ul{I}} \lambda^2 \sum_{\alpha, \beta = 1}^{S} \left( \omega_{\ul{I}} \right)_{\alpha \beta} \left( \eta^\ast_{\ul{I}} \right)_{\alpha \beta} \nonumber \\
    &= \lambda^2 \sum_{\ul{I}} \left( \omega_{\ul{I}} \right)_{\alpha \beta} \left( \eta^\dagger_{\ul{I}} \right)_{\beta \alpha} \nonumber \\
    &= \lambda^2 \sum_{\ul{I}} \tr \left( \omega_{\ul{I}} \eta^\dagger_{\ul{I}} \right) \, .
\end{align}
In this case, the inner product is proportional to a sum of the standard\footnote{It is more common to define the inner product between operators as $\langle \mathcal{O}_1 , \mathcal{O}_2 \rangle = \tr \left( \mathcal{O}_1^\dagger \mathcal{O}_2 \right)$, but as we commented above, our inner product is antilinear in the second slot rather than in the first.} inner products
\begin{align}\label{usual_inner_product}
    \langle \mathcal{O}_1 , \mathcal{O}_2 \rangle = \tr \left( \mathcal{O}_1 \mathcal{O}_2^\dagger \right) \, ,
\end{align}
on each of the restricted Hilbert spaces $\mathrm{im} ( \rho_{\ul{I}} ) $. However, in the general case of unequal Schmidt coefficients, our inner product (\ref{inner_product_components}) is a generalization of the usual structure (\ref{usual_inner_product}) which includes weightings by the Schmidt coefficients.

From the formula (\ref{inner_product_components}) it is also clear that this inner product is positive-definite, as
\begin{align}\label{inner_product_pos_def}
    \langle \omega, \omega \rangle = \sum_{\ul{I}} \sum_{\alpha, \beta = 1}^{S} \lambda_\alpha \lambda_\beta \left| \left( \omega_{\ul{I}} \right)_{\alpha \beta} \right|^2 \, ,
\end{align}
which is a sum of positive-definite quantities, since the Schmidt coefficients $\lambda_\alpha$ are all positive by convention.\footnote{Here by ``positive-definite'' we mean $\langle \omega , \omega \rangle \geq 0$, with equality if and only if $\omega = 0$.}

The existence of the inner product (\ref{inner_product_to_hodge}) endows each of the vector spaces $\Omega^k ( \rhooneton )$ with the structure of an inner product space. In particular, this means that each $\Omega^k ( \rhooneton )$ is a Hilbert space, since every finite-dimensional inner product space is Hilbert.

Given such an inner product, we may define a codifferential operator $\delta$ which satisfies
\begin{align}\label{codiff_defn}
    \langle \omega, d \eta \rangle = \langle \delta \omega, \eta \rangle \, .
\end{align}
That is, $\delta$ is the unique linear operator which is the adjoint of the coboundary operator $d$. A form which is annihilated by $\delta$ is said to be \emph{co-closed}.

The existence of such an operator $\delta$ is guaranteed since, in finite-dimensional Hilbert spaces, adjoint operators always exist; to see this, note that one can express $d$ in components, as a matrix defined with respect to an orthonormal basis for the Hilbert space $\Omega^k ( \rhooneton )$, and then take the conjugate-transpose of this matrix. Since $dd = 0$, one has
\begin{align}
    0 = \langle \omega , d d \eta \rangle = \langle \delta \delta \omega , \eta \rangle \, ,
\end{align}
so $\delta \delta = 0$, and thus the codifferential inherits the nilpotency of the coboundary operator.

When acting on elements of $\Omega^k ( \rhooneton )$, one can show by explicit calculation that the codifferential $\delta$, defined by (\ref{codiff_defn}), is proportional to $\ast d \ast$:
\begin{align}\label{delta_to_stardstar}
    \delta = c_{n, k} \left( - 1 \right)^{n ( k + 1 ) } \ast d \ast \, .
\end{align}
Here $c_{n, k}$ is a constant whose value depends on $n$ and $k$, which arises from the fact that we have introduced a factor of $m$ in the definition of the coboundary operator (\ref{general_dm_defn}) only for $m \geq 1$. Explicitly, the values of these constants are
\begin{align}\label{cnk_defn}
    c_{n, k} = \begin{cases} \frac{1}{n - 1} &\text{if } k = 1 \\ \frac{k - 1}{ n - k } &\text{if } 1 < k < n \\ n - 1 &\text{if } k = n \end{cases} \, .
\end{align}

\subsection{Hodge Isomorphism, Decomposition, and Duality}

In this section, we aim to prove versions of the Hodge isomorphism theorem, Hodge decomposition, and Hodge duality for entanglement cohomology. In fact, we will prove the first two of these results in a slightly more general setting. We commented, below equation (\ref{inner_product_pos_def}), that our inner product endows each of the vector spaces $\Omega^k ( \rhooneton )$ with the structure of a finite-dimensional Hilbert space. We will therefore give the proofs of the Hodge isomorphism theorem and Hodge decomposition for general cochain complexes, whose associated vector spaces are finite-dimensional Hilbert spaces, and then specialize to the case of entanglement cohomologies at the end.

Let us begin with the following result on harmonic representatives.

\begin{lemma}\label{lemma_harmonic_rep}
Consider a cochain complex
\begin{align}
    \cdots \xrightarrow{d^{-1}} V_0 \xrightarrow{d^{0}} V_1 \xrightarrow{d^{1}} V_2 \xrightarrow{d^{2}} \ldots \, ,
\end{align}
in which each of the $V_i$ is a finite-dimensional Hilbert space with inner product $\langle \cdot \, , \, \cdot \rangle : V_i \times V_i \to \mathbb{C}$. Let $\delta^i : V_{i + 1} \to V_{i}$ be the $i$-th codifferential, which is the adjoint of the coboundary operator $d^i$ with respect to this inner product. Then in each cohomology class there is a unique element which is annihilated by $\delta$, and which is the representative with the smallest norm in its cohomology class. 

\end{lemma}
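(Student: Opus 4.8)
The plan is to prove this statement by the standard Hodge-theoretic argument adapted to the finite-dimensional Hilbert space setting, where everything reduces to elementary linear algebra. Define the Laplacian $\Delta = d \delta + \delta d$ acting on $V_i$ (with the appropriate degrees of $d$ and $\delta$ understood). I would first establish the key lemma that $\Delta \omega = 0$ if and only if $d \omega = 0$ and $\delta \omega = 0$; this follows because
\begin{align}
    \langle \Delta \omega , \omega \rangle = \langle d \delta \omega , \omega \rangle + \langle \delta d \omega , \omega \rangle = \langle \delta \omega , \delta \omega \rangle + \langle d \omega , d \omega \rangle \, ,
\end{align}
so $\langle \Delta \omega, \omega \rangle$ is a sum of two non-negative terms which vanishes precisely when both $d\omega = 0$ and $\delta \omega = 0$ (using positive-definiteness of the inner product). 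Call a form \emph{harmonic} if it is annihilated by $\Delta$, equivalently if it is both closed and co-closed.

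Next I would establish the orthogonal decomposition $V_i = \operatorname{im}(d^{i-1}) \oplus \operatorname{im}(\delta^i) \oplus \ker(\Delta)$, where the three summands are mutually orthogonal. Orthogonality of $\operatorname{im}(d)$ and $\operatorname{im}(\delta)$ follows from $\langle d\alpha, \delta\beta\rangle = \langle dd\alpha, \beta\rangle = 0$; orthogonality of $\ker(\Delta)$ to each image uses that a harmonic form is both closed and co-closed, e.g. $\langle d\alpha, \omega\rangle = \langle \alpha, \delta\omega\rangle = 0$. Since these are finite-dimensional Hilbert spaces, to get that the sum is all of $V_i$ it suffices to show the orthogonal complement of $\operatorname{im}(d) \oplus \operatorname{im}(\delta)$ is contained in $\ker(\Delta)$: if $\omega \perp \operatorname{im}(d)$ then $\langle \omega, d\alpha\rangle = 0$ for all $\alpha$, so $\langle \delta\omega, \alpha\rangle = 0$, hence $\delta\omega = 0$; similarly $\omega \perp \operatorname{im}(\delta)$ gives $d\omega = 0$; thus $\omega$ is harmonic. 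From this decomposition, existence and uniqueness of a harmonic representative in each cohomology class follows: given a closed form $\omega$, write $\omega = d\alpha + \delta\beta + h$ with $h$ harmonic; closedness forces $0 = d\omega = d\delta\beta$, hence $\langle d\delta\beta, \beta\rangle = \langle \delta\beta, \delta\beta\rangle = 0$, so $\delta\beta = 0$ and $\omega = d\alpha + h$, meaning $h$ is cohomologous to $\omega$. Uniqueness: if $h, h'$ are both harmonic and $h - h' = d\alpha$ is exact, then $h - h'$ is both harmonic and exact, hence orthogonal to itself, so $h = h'$.

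Finally, for the minimal-norm claim, take any representative $\omega = h + d\alpha$ of the cohomology class with $h$ its harmonic representative; since $h \perp \operatorname{im}(d)$, the Pythagorean identity gives $\|\omega\|^2 = \|h\|^2 + \|d\alpha\|^2 \geq \|h\|^2$, with equality iff $d\alpha = 0$, i.e. iff $\omega = h$. This shows $h$ is the unique smallest-norm representative. I would also record that the element annihilated by $\delta$ in a given cohomology class is exactly this $h$: if $\omega$ is closed and co-closed then it is harmonic, and any harmonic form in the class equals $h$ by uniqueness; conversely $h$ itself is co-closed. I do not anticipate a serious obstacle here — the only points requiring care are that $d$ and $\delta$ genuinely are adjoint (given in the hypothesis) and that the inner product is genuinely positive-definite (established earlier around equation (\ref{inner_product_pos_def}) for the entanglement case, and assumed in the hypothesis here); the finite-dimensionality makes the orthogonal complement argument completely routine, avoiding any analytic difficulties present in the manifold setting. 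The mild subtlety to state carefully is keeping track of which degree of $d$ and $\delta$ appears where, so that $\Delta$ restricted to $V_i$ really is $d^{i-1}\delta^{i-1} + \delta^i d^i$ with all compositions well-defined.
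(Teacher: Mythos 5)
Your proof is correct, but it takes a different (and somewhat heavier) route than the paper. The paper's proof uses only the two-term orthogonal decomposition $V_i = \operatorname{im}(d^{i-1}) \oplus \ker(\delta^{i-1})$, writes a closed representative as $\omega = d^{i-1}\lambda + \eta$ with $\delta^{i-1}\eta = 0$, and observes that $d^i\omega = 0$ together with $d^i d^{i-1} = 0$ forces $d^i\eta = 0$, so the $\ker\delta$-component is automatically the harmonic representative; uniqueness and minimality then follow from orthogonality exactly as in your Pythagorean argument. You instead establish the full three-term decomposition $V_i = \operatorname{im}(d^{i-1}) \oplus \operatorname{im}(\delta^i) \oplus \ker(\Delta)$ together with the equivalence $\Delta\omega = 0 \iff d\omega = 0 = \delta\omega$, and then extract the lemma as a consequence. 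Both arguments are sound and rest on the same finite-dimensional linear algebra; yours proves strictly more than is needed here, and in fact the extra content you establish is essentially the paper's later Theorem \ref{hodge_decomp} (Hodge decomposition) and the Laplacian characterization of harmonic forms that the paper derives afterwards, so your route front-loads results the paper prefers to develop separately. One small point in your favor: your minimal-norm step via the Pythagorean identity $\|h + d\alpha\|^2 = \|h\|^2 + \|d\alpha\|^2$ is cleaner for a complex inner product than the paper's expansion with a cross term $2\langle\eta, d^{i-1}\lambda\rangle$ (which should strictly be $2\,\mathrm{Re}\,\langle\eta, d^{i-1}\lambda\rangle$, though the term vanishes either way). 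Your closing remark about tracking the degrees in $\Delta\big|_{V_i} = d^{i-1}\delta^{i-1} + \delta^i d^i$ is exactly the right bookkeeping to be careful about.
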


\begin{proof}
    One trivially has the orthogonal decomposition
    \begin{align}\label{Vi_decomp}
        V_i = \mathrm{im} \left( d^{i-1} \right) \oplus \left( \mathrm{im} \left( d^{i-1} \right) \right)^\perp \, ,
    \end{align}
    where $\perp$ denotes the orthogonal complement with respect to the inner product. But 
    \begin{align}
        \left( \mathrm{im} \left( d^{i-1} \right) \right)^\perp = \ker \left( \delta^{i-1} \right) \, ,
    \end{align}
    in finite-dimensional Hilbert spaces, since
    \begin{align}
        \langle d^{i-1} \omega , \eta \rangle = \langle \omega, \delta^{i - 1} \eta \rangle \, ,
    \end{align}
    and thus $\eta \in V_i$ is orthogonal to the image of $d^{i-1}$ if and only if it is annihilated by $\delta^{i - 1}$.

    Therefore the decomposition (\ref{Vi_decomp}) can be written as
    \begin{align}\label{Vi_decomp_two}
        V_i = \im \left( d^{i-1} \right) \oplus \ker \left( \delta^{i - 1} \right) \, .
    \end{align}
    Let $[\omega]$ be a cohomology class in $V_i$ with representative $\omega$, so that $d^i \omega = 0$. We decompose $\omega \in V_i$ according to (\ref{Vi_decomp_two}) as
    \begin{align}
        \omega = d^{i-1} \lambda + \eta \, ,
    \end{align}
    where $\lambda \in V_{i-1}$ and $\delta^{i-1} \eta = 0$. Note that, since
    \begin{align}
        d^i \omega = d^i \left( d^{i-1} \lambda + \eta \right) = d^i \eta = 0 \, ,
    \end{align}
    that $\eta$ is annihilated by both $d^i$ and $\delta^{i-1}$. This choice of $\eta$ is unique, by the uniqueness of orthogonal decompositions; said differently, any other $\eta' \neq \eta$ in the class $[\omega]$ will differ from $\eta$ by an exact form, which means that it would have a non-trivial projection onto $\im ( d^{i-1} )$, contradicting the orthogonal decomposition (\ref{Vi_decomp_two}).

    Finally, $\eta$ has the smallest norm in the cohomology class $[\omega]$ since for any $\lambda \in V_{i-1}$,
    \begin{align}
        \langle \eta + d^{i-1} \lambda , \eta + d^{i-1} \lambda \rangle &= \langle \eta , \eta \rangle + 2 \langle \eta, d^{i-1} \lambda \rangle + \langle d^{i-1} \lambda , d^{i-1} \lambda \rangle \\
        &= \langle \eta , \eta \rangle + 2 \langle \delta^{i-1} \eta, \lambda \rangle + \langle d^{i-1} \lambda , d^{i-1} \lambda \rangle \nonumber \\
        &= \langle \eta , \eta \rangle + \langle d^{i-1} \lambda , d^{i-1} \lambda \rangle \nonumber \\
        &\geq \langle \eta , \eta \rangle  \, . \qedhere
    \end{align}
\end{proof}
We refer to any element $\eta$ which is annihilated by both the appropriate coboundary operator $d$ and codifferential $\delta$ as a \emph{harmonic form}, and if $\eta$ belongs to a cohomology class $[\omega]$, we say that $\eta$ is a \emph{harmonic representative} of $[ \omega ]$. The content of Proposition \ref{lemma_harmonic_rep}, when applied to the special case of any entanglement complex (\ref{general_npartite_complex}) with inner product (\ref{inner_product_to_hodge}), is that every entanglement cohomology class has a unique harmonic representative.

As we mentioned above, we can apply Proposition \ref{lemma_harmonic_rep} to the setting of entanglement cohomology due to the result that each $\Omega^k ( \rhooneton )$ is a finite-dimensional Hilbert space. Of course the analogous statement about the existence of harmonic representatives also holds for differential forms -- this is the usual Hodge theorem -- but the proof is more involved, since the space $\Omega^k ( \mathcal{M} )$ of differential $k$-forms on a manifold $\mathcal{M}$ is infinite-dimensional, and does not form a Hilbert space because the inner product (\ref{forms_inner_product}) is not complete.

Having shown, in Proposition \ref{lemma_harmonic_rep}, that each cohomology class has a unique harmonic representative, we now show the converse: each harmonic form can be uniquely associated to a cohomology class. This establishes an isomorphism between $\mathrm{Harm}^k$, the space of harmonic $k$-forms in a complex, and $H^k$, the $k$-th cohomology group of the complex.

\begin{thm}[Hodge isomorphism]\label{bijection}
Given a cochain complex satisfying the assumptions of Proposition \ref{lemma_harmonic_rep}, let $\mathrm{Harm}^k$ denote the vector space of harmonic elements of $V_k$. Then
\begin{align}
    \mathrm{Harm}^k \cong H^k \, .
\end{align}
\end{thm}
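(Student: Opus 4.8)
The plan is to write down the obvious candidate isomorphism and check it is a bijection, with essentially all the analytic content imported from Proposition~\ref{lemma_harmonic_rep}. Concretely, I would define the linear map
\begin{align}
    \Phi : \mathrm{Harm}^k \longrightarrow H^k \, , \qquad \eta \longmapsto [\eta] \, ,
\end{align}
sending a harmonic form to its cohomology class. The first thing to verify is that $\Phi$ is well-defined: by definition a harmonic $\eta \in V_k$ is annihilated by the coboundary operator $d^k$, so it is a cocycle and hence represents a genuine class in $H^k = \ker(d^k)/\mathrm{im}(d^{k-1})$. Linearity of $\Phi$ is immediate, since $\mathrm{Harm}^k = \ker(d^k) \cap \ker(\delta^{k-1})$ is a linear subspace of $V_k$ and $\Phi$ is the restriction to it of the (linear) quotient projection.

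Next I would establish surjectivity and injectivity separately. Surjectivity is precisely the existence half of Proposition~\ref{lemma_harmonic_rep}: every class $[\omega] \in H^k$ contains a representative $\eta$ that is killed by both $d^k$ and $\delta^{k-1}$, i.e.\ an element of $\mathrm{Harm}^k$ with $\Phi(\eta) = [\omega]$. For injectivity, suppose $\eta \in \mathrm{Harm}^k$ with $\Phi(\eta) = 0$; then $\eta = d^{k-1}\lambda$ for some $\lambda \in V_{k-1}$, so $\eta \in \mathrm{im}(d^{k-1})$. But $\eta$ is also co-closed, $\delta^{k-1}\eta = 0$, and by the adjointness relation used in the proof of Proposition~\ref{lemma_harmonic_rep} we have $\ker(\delta^{k-1}) = \bigl(\mathrm{im}(d^{k-1})\bigr)^\perp$, so $\eta$ lies in $\mathrm{im}(d^{k-1})$ and in its orthogonal complement simultaneously, forcing $\eta = 0$. (Equivalently, both $0$ and $\eta$ are harmonic representatives of the zero class, hence equal by the uniqueness clause of Proposition~\ref{lemma_harmonic_rep}.) A linear bijection between finite-dimensional spaces is an isomorphism, so $\mathrm{Harm}^k \cong H^k$, and specializing to an entanglement complex (\ref{general_npartite_complex}) with the inner product (\ref{inner_product_to_hodge}) gives the corresponding statement for entanglement cohomology.

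I do not expect any real obstacle here: the orthogonal-decomposition input $V_k = \mathrm{im}(d^{k-1}) \oplus \ker(\delta^{k-1})$ from Proposition~\ref{lemma_harmonic_rep} already does all the heavy lifting, and the only thing that requires care is keeping the index bookkeeping straight (which coboundary $d^{k-1}$ versus $d^k$, and which codifferential $\delta^{k-1}$ enters the relevant orthogonal complement), so that the same space $\mathrm{Harm}^k \subset V_k$ is used consistently on both sides. If desired, one can also record that $\Phi$ is an \emph{isometry} onto $H^k$ equipped with the quotient (minimal-norm) inner product, which follows from the minimal-norm statement in Proposition~\ref{lemma_harmonic_rep}, though this is not needed for the isomorphism itself.
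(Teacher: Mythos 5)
Your proposal is correct and follows essentially the same route as the paper: the same map $\eta \mapsto [\eta]$, surjectivity from the existence of harmonic representatives in Proposition \ref{lemma_harmonic_rep}, and injectivity from the fact that a harmonic exact form must vanish. Your kernel-triviality argument via $\ker(\delta^{k-1}) = \left(\mathrm{im}(d^{k-1})\right)^{\perp}$ is just a repackaging of the paper's computation of $\langle \omega - \eta, \omega - \eta\rangle$ using adjointness and positive-definiteness, so there is nothing substantively different to flag.
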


\begin{proof}
    Every harmonic form is closed, so we may define the map
    \begin{align}
        f : \mathrm{Harm}^k \to H^k \, ,
    \end{align}
    which sends a harmonic form $\omega \in \mathrm{Harm}^k$ to the cohomology class $[\omega] \in H^k$.
    
    This is a surjection, since by Proposition \ref{lemma_harmonic_rep}, every cohomology class has a harmonic representative. To show that it is an injection,  suppose that $f ( \omega ) = f ( \eta )$, or $[ \omega ] = [ \eta ]$. This means that the harmonic forms $\omega$ and $\eta$ are in the same cohomology class, so
    \begin{align}
        \omega - \eta = d^{k-1} \lambda \, ,
    \end{align}
    for some $\lambda \in V_{k-1}$. But then
    \begin{align}
        \langle\omega-\eta , \omega-\eta\rangle &= \langle d^{k-1} \lambda, \omega-\eta\rangle \nonumber \\
        & =\langle\lambda, \delta^{k-1} (\omega-\eta)\rangle \nonumber \\
        & =\langle\lambda , 0\rangle \nonumber \\
        &=0 \, ,
    \end{align}
    where we used that $\omega$, $\eta$ are harmonic, and thus are annihilated by $\delta^{k-1}$. By the positive-definiteness of the inner product, this implies that $\omega = \eta$, which establishes injectivity.

    Therefore, the map $f$ is a bijection. It is also clearly linear. We therefore conclude that $f$ is an isomorphism between the vector spaces $\mathrm{Harm}^k$ and $H^k$.    
\end{proof}
A variant of the Hodge isomorphism theory for differential forms, which generalizes the Helmholtz decomposition, is the Hodge decomposition
\begin{align}
    \omega = d \lambda + \delta \eta + \xi \, ,
\end{align}
which states that any differential $k$-form can be uniquely decomposed into the sum of a closed form, a co-closed form, and a harmonic form. We now prove that an identical statement holds for complexes of finite-dimensional Hilbert spaces, and therefore (as a special case) for entanglement cohomology.

\begin{thm}[Hodge decomposition]\label{hodge_decomp}
Consider a cochain complex satisfying the assumptions of Proposition \ref{lemma_harmonic_rep} and let $\omega \in V_i$. Then $\omega$ admits a unique decomposition
\begin{align}\label{hodge_decomposition}
    \omega = d^{i-1} \lambda + \delta^i \eta + \xi \, ,
\end{align}
for forms $\lambda \in V_{i-1}$, $\eta \in V_{i+1}$, and $\xi \in \mathrm{Harm}^i$.
\end{thm}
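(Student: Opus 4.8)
The plan is to realize (\ref{hodge_decomposition}) as the orthogonal decomposition of $V_i$ into three mutually orthogonal subspaces, pushing the argument of Proposition \ref{lemma_harmonic_rep} one step further. First I would record the three subspace identities that make this work. Two of them are immediate and have already appeared: by the adjointness relation $\langle d^{i-1} \omega , \eta \rangle = \langle \omega , \delta^{i-1} \eta \rangle$ one has $\left( \im ( d^{i-1} ) \right)^\perp = \ker ( \delta^{i-1} )$, and by the adjointness relation $\langle \delta^i \mu , \eta \rangle = \langle \mu , d^i \eta \rangle$ one gets $\left( \im ( \delta^i ) \right)^\perp = \ker ( d^i )$, since $\eta$ is orthogonal to every $\delta^i \mu$ precisely when $d^i \eta = 0$.

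The third identity is where nilpotency enters, and it is the only step requiring a moment's care: I claim $\im ( d^{i-1} )$ and $\im ( \delta^i )$ are orthogonal. Indeed, for any $\lambda \in V_{i-1}$ and $\mu \in V_{i+1}$,
\begin{align}
    \langle d^{i-1} \lambda , \delta^i \mu \rangle = \langle d^i d^{i-1} \lambda , \mu \rangle = 0 \, ,
\end{align}
using that $\delta^i$ is the adjoint of $d^i$ and that $d^i \circ d^{i-1} = 0$. Hence $W = \im ( d^{i-1} ) + \im ( \delta^i )$ is an internal orthogonal direct sum $\im ( d^{i-1} ) \oplus \im ( \delta^i )$, and
\begin{align}
    W^\perp = \left( \im ( d^{i-1} ) \right)^\perp \cap \left( \im ( \delta^i ) \right)^\perp = \ker ( \delta^{i-1} ) \cap \ker ( d^i ) = \mathrm{Harm}^i \, .
\end{align}

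Now I would invoke the orthogonal decomposition $V_i = W \oplus W^\perp$, valid because each $V_i$ is a finite-dimensional Hilbert space, to obtain
\begin{align}
    V_i = \im ( d^{i-1} ) \oplus \im ( \delta^i ) \oplus \mathrm{Harm}^i \, ,
\end{align}
with the three summands pairwise orthogonal. Writing the three components of $\omega \in V_i$ under this decomposition as $d^{i-1} \lambda$, $\delta^i \eta$, and $\xi \in \mathrm{Harm}^i$ yields the existence statement (\ref{hodge_decomposition}). For uniqueness, suppose $\omega = d^{i-1} \lambda + \delta^i \eta + \xi = d^{i-1} \lambda' + \delta^i \eta' + \xi'$; subtracting and pairing the resulting identity $0 = d^{i-1} ( \lambda - \lambda' ) + \delta^i ( \eta - \eta' ) + ( \xi - \xi' )$ successively with each of its three terms, and using the pairwise orthogonality together with positive-definiteness, forces $d^{i-1} ( \lambda - \lambda' ) = 0$, $\delta^i ( \eta - \eta' ) = 0$, and $\xi = \xi'$. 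Thus the three summands are uniquely determined by $\omega$ (though $\lambda$ and $\eta$ themselves are pinned down only modulo $\ker d^{i-1}$ and $\ker \delta^i$).

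Finally, to recover the entanglement-cohomology statement I would specialize to the complex (\ref{general_npartite_complex}) equipped with the inner product (\ref{inner_product_to_hodge}), which by the discussion surrounding (\ref{inner_product_pos_def}) makes each $\Omega^k ( \rhooneton )$ a finite-dimensional Hilbert space, so the hypotheses of Proposition \ref{lemma_harmonic_rep} hold. I expect the only genuine obstacle to be the orthogonality $\im ( d^{i-1} ) \perp \im ( \delta^i )$, i.e.\ the bookkeeping of which adjoint relation and which instance of nilpotency to use; the remainder is the standard orthogonal-complement calculus in a finite-dimensional inner product space.
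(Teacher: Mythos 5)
Your proof is correct and follows essentially the same route as the paper's: both rest on the finite-dimensional orthogonal-complement identities $\left( \im ( d^{i-1} ) \right)^\perp = \ker ( \delta^{i-1} )$ and $\left( \im ( \delta^i ) \right)^\perp = \ker ( d^i )$. The only difference is presentational --- the paper performs two successive two-fold orthogonal decompositions (first splitting off $\im ( d^{i-1} )$, then splitting the co-closed remainder), whereas you establish the three-way orthogonal splitting $V_i = \im ( d^{i-1} ) \oplus \im ( \delta^i ) \oplus \mathrm{Harm}^i$ directly, which has the minor virtue of making the mutual orthogonality of $\im ( d^{i-1} )$ and $\im ( \delta^i )$ (via $d^i \circ d^{i-1} = 0$) explicit.
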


\begin{proof}
    We first use the orthogonal decomposition (\ref{Vi_decomp_two}) to write
    \begin{align}
        \omega = d^{i-1} \lambda + \widetilde{\eta} \, ,
    \end{align}
    where $\widetilde{\eta}$ is co-closed, but because $\omega$ is not assumed to be closed, $\widetilde{\eta}$ need not be closed.
    
    We now further decompose $\widetilde{\eta}$ according to the orthogonal decomposition
    \begin{align}
        V_i &= \im \left( \delta^{i} \right) \oplus \left( \im \left( \delta^i \right) \right)^\perp \nonumber \\
        &= \im \left( \delta^i \right) \oplus \ker \left( d^i \right) \, ,
    \end{align}
    which allows us to write
    \begin{align}
        \widetilde{\eta} = \delta^i \eta + \xi \, .
    \end{align}
    Then $d^i \xi = \delta^{i-1} \xi = 0$, so $\xi$ is harmonic. Combining these decompositions gives
    \begin{align}
        \omega = d^{i-1} \lambda + \delta^i \eta + \xi \, ,
    \end{align}
    which is again unique because of the uniqueness of orthogonal decompositions.
\end{proof}

The above results hold for generic cochain complexes whose associated vector spaces are finite-dimensional Hilbert spaces. We now specialize to the case of entanglement cohomology, which affords us the additional structure that the coboundary operation $\delta$ may be expressed in terms of a Hodge star operator as in equation (\ref{delta_to_stardstar}). This assumption makes the entanglement complex behave similarly to the de Rham complex, rather than merely an abstract differential complex with inner products.

Let us take this opportunity to introduce some additional notation, for completeness. We define the Laplacian $\Delta : \Omega^k ( \rhooneton ) \to \Omega^k ( \rhooneton )$ as
\begin{align}
    \Delta = d \delta + \delta d \, .
\end{align}
For the remainder of this section, we will suppress the indices $d^{i}$, $\delta^{i}$, etc. on coboundary and codifferential operators, for simplicity.

We previously defined harmonic forms as those which are annihilated by both $d$ and $\delta$; an equivalent definition is that harmonic forms are annihilated by the Laplacian $\Delta$. One direction of this equivalence is clear, since if $d \omega = 0 = \delta \omega$, then $\Delta \omega = 0$. To see the opposite implication, note that if
\begin{align}
    \Delta \omega = d \delta \omega + \delta d \omega = 0 \, ,
\end{align}
then if we perform a Hodge decomposition of $\omega$ as in equation (\ref{hodge_decomposition}),
\begin{align}
    \omega = d \lambda + \delta \eta + \xi \, ,
\end{align}
then one finds
\begin{align}\label{harmonic_condition}
    0 &= d \delta \left( d \lambda + \delta \eta + \xi \right) + \delta d \left( d \lambda + \delta \eta + \xi \right) \nonumber \\
    &= d \delta d \lambda + \delta d \delta \eta \, .
\end{align}
Then one has
\begin{align}
    \langle d \omega, d \omega \rangle &= \langle d \delta \eta , d \delta \eta \rangle \nonumber \\
    &= \langle \delta \eta, \delta d \delta \eta \rangle \nonumber \\
    &= - \langle \delta \eta, d \delta d \lambda \rangle \nonumber \\
    &= - \langle \delta \delta \eta, \delta d \lambda \rangle \nonumber \\
    &= 0 \, ,
\end{align}
where we used (\ref{harmonic_condition}) and $\delta \delta = 0$, which implies that $d \omega = 0$ by positive-definiteness of the inner product. Likewise,
\begin{align}
    \langle \delta \omega, \delta \omega \rangle &= \langle \delta d \lambda, \delta d \lambda \rangle \nonumber \\
    &= \langle d \lambda, d \delta d \lambda \rangle \nonumber \\
    &= - \langle d \lambda, \delta d \delta \eta \rangle \nonumber \\
    &= - \langle d d \lambda, d \delta \eta \rangle \nonumber \\
    &= 0 \, ,
\end{align}
where we again used (\ref{harmonic_condition}) and $dd = 0$. We conclude that
\begin{align}
    \Delta \omega = 0 \, \iff \, d \omega = 0 = \delta \omega \, ,
\end{align}
so one may characterize harmonic forms as either those forms which are both closed and co-closed, or as those which are annihilated by the Laplacian, as claimed.

The advantage of restricting to entanglement cohomology, where the codifferential $\delta$ is related to $d$ and the Hodge star, is that in this setting the Hodge star operation maps harmonic forms to harmonic forms. We now turn to explaining this fact.

\begin{thm}[Hodge duality]\label{hodge_duality}

Consider a pure state density matrix $\rhooneton$ and the spaces $\Omega^k ( \rhooneton )$ of the associated entanglement $k$-forms. Let $\mathrm{Harm}^k ( \rhooneton ) \subset \Omega^k ( \rhooneton )$ be the space of harmonic $k$-forms, i.e. those that are annihilated by both $d$ and $\delta$. Then
\begin{align}
    \mathrm{Harm}^k ( \rhooneton ) \cong \mathrm{Harm}^{n-k} ( \rhooneton ) \, .
\end{align}
\end{thm}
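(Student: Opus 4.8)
The plan is to show that the Hodge star operation $\ast$ of Section \ref{sec:hodge_star} already supplies the desired isomorphism, once we check that it carries harmonic $k$-forms to harmonic $(n-k)$-forms. We may take for granted that $\ast : \Omega^k(\rhooneton) \to \Omega^{n-k}(\rhooneton)$ is a linear isomorphism of the ambient spaces (this is the content of the discussion around (\ref{kform_isomorphism}), reinforced by the identity $\ast\ast = (-1)^{k(n-k)}$), as well as the characterization proved just above: a form is harmonic if and only if it is both closed and co-closed. The strategy is then simply to verify that $\ast$ respects this pair of conditions.

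Concretely, I would fix $\omega \in \mathrm{Harm}^k(\rhooneton)$ and use the relation $\delta = c_{n,k}(-1)^{n(k+1)} \ast d \ast$ from (\ref{delta_to_stardstar})--(\ref{cnk_defn}), valid on each $\Omega^k(\rhooneton)$ with $c_{n,k}\neq 0$. For co-closedness of $\ast\omega$, one computes $\delta(\ast\omega) = c_{n,n-k}(-1)^{n(n-k+1)}\ast d \ast \ast\, \omega = c_{n,n-k}(-1)^{n(n-k+1)}(-1)^{k(n-k)}\ast d\omega$, which vanishes because $d\omega = 0$. For closedness of $\ast\omega$, one starts from $0 = \delta\omega = c_{n,k}(-1)^{n(k+1)}\ast d\ast\omega$ and applies $\ast$ to both sides; since $\ast\ast$ acts on $\Omega^{n-k+1}(\rhooneton)$ as the nonzero scalar $(-1)^{(n-k+1)(k-1)}$, this forces $d(\ast\omega) = 0$. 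Hence $\ast\omega \in \mathrm{Harm}^{n-k}(\rhooneton)$, i.e. $\ast\big(\mathrm{Harm}^k(\rhooneton)\big) \subseteq \mathrm{Harm}^{n-k}(\rhooneton)$.

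The same computation with $k$ replaced by $n-k$ gives $\ast\big(\mathrm{Harm}^{n-k}(\rhooneton)\big) \subseteq \mathrm{Harm}^{k}(\rhooneton)$. Consequently the composite $\ast\ast$ restricts to $(-1)^{k(n-k)}$ times the identity on $\mathrm{Harm}^k(\rhooneton)$, and likewise on $\mathrm{Harm}^{n-k}(\rhooneton)$; both composites are bijective because the scalar is nonzero. It follows that $\ast : \mathrm{Harm}^k(\rhooneton) \to \mathrm{Harm}^{n-k}(\rhooneton)$ is both injective (from the first composite) and surjective (from the second), hence a bijection, and since it is linear it is an isomorphism of vector spaces. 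Combined with the Hodge isomorphism theorem (Theorem \ref{bijection}), this at once yields $\dim H^k(\rhooneton) = \dim H^{n-k}(\rhooneton)$ and thus the claimed symmetry of the Poincar\'e polynomials.

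The only place that requires care — and the step I expect to be the main source of potential error — is the degree bookkeeping in $\delta = c_{n,k}(-1)^{n(k+1)}\ast d\ast$: the constant $c_{n,k}$ and the sign both depend on which space the operator acts on, and the argument invokes $\ast\ast$ on three different spaces ($\Omega^k$, $\Omega^{n-k}$, and $\Omega^{n-k+1}$). What makes the proof go through cleanly is that none of these factors ever vanishes, so every cancellation is legitimate; but to avoid sign slips the indices should be written out explicitly rather than suppressed.
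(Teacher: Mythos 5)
Your proposal is correct and follows essentially the same route as the paper's proof: use $\delta \propto \ast\, d\, \ast$ together with $\ast\ast = (-1)^{k(n-k)}$ to show that $\ast$ sends closed-and-co-closed forms to closed-and-co-closed forms, then deduce bijectivity from the fact that $\ast\ast$ restricts to a nonzero scalar on the harmonic subspaces. If anything, your bookkeeping is slightly more careful than the paper's, which writes $c_{n,k}$ where the degree-$(n-k)$ constant $c_{n,n-k}$ is the one formally acting; since all these constants are nonzero this does not affect the conclusion.
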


\begin{proof}

Suppose $\omega \in \mathrm{Harm}^k ( \rhooneton )$ so that $d \omega = 0 = \delta \omega$. Consider the Hodge dual $\ast \omega$ of this form. Since $\delta = ( - 1 )^{n ( k + 1 ) } c_{n, k} \ast d \ast$, in the notation of (\ref{cnk_defn}), one has
\begin{align}
    d \ast \omega = ( - 1)^{k ( n - k ) } \ast \left( \ast d \ast \omega \right) = \frac{1}{c_{n, k}} ( - 1)^{k ( n - k ) } ( - 1 )^{n ( k + 1 ) } \ast \delta \omega = 0 \, ,
\end{align}
while
\begin{align}
    \delta \ast \omega = c_{n, k} ( - 1 )^{n ( k + 1 ) } \ast d \ast \ast \omega = c_{n, k} ( - 1 )^{n ( k + 1 ) } ( - 1 )^{k ( n - k) } \ast d \omega = 0 \, ,
\end{align}
so $\ast \omega$ is annihilated by both $d$ and $\delta$, and thus $\ast \omega \in \mathrm{Harm}^{n-k} ( \rhooneton )$. 

By a similar argument, we can construct a harmonic $k$-form from any harmonic $(n-k)$-form by taking its Hodge dual. The map $\ast : \mathrm{Harm}^k ( \rhooneton ) \leftrightarrow \mathrm{Harm}^{n-k} ( \rhooneton )$ is a bijection, since $\ast \ast$ is proportional to the identity. We conclude that
\begin{align}
    \mathrm{Harm}^k ( \rhooneton ) \cong \mathrm{Harm}^{n-k} ( \rhooneton ) \, ,
\end{align}
as desired.
\end{proof}

Finally, we conclude by pointing out a simple corollary of this result.

\begin{corollary}\label{symmetry_corollary}
    For any pure state density matrix $\rhooneton$, one has
    \begin{align}
        \dim \left( H^k ( \rhooneton ) \right) = \dim \left( H^{n-k} ( \rhooneton ) \right) \, .
    \end{align}
\end{corollary}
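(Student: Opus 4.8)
The plan is to obtain the equality of dimensions by composing the two isomorphisms already established in this section. First I would apply the Hodge isomorphism theorem (Theorem \ref{bijection}) to the entanglement complex (\ref{general_npartite_complex}) equipped with the inner product (\ref{inner_product_to_hodge}); this is legitimate precisely because each $\Omega^k(\rhooneton)$ is a finite-dimensional Hilbert space, as noted below (\ref{inner_product_pos_def}). This gives $H^k(\rhooneton) \cong \mathrm{Harm}^k(\rhooneton)$, and applying the same statement with $k$ replaced by $n-k$ also yields $H^{n-k}(\rhooneton) \cong \mathrm{Harm}^{n-k}(\rhooneton)$. Next I would invoke the Hodge duality theorem (Theorem \ref{hodge_duality}), which supplies the middle link $\mathrm{Harm}^k(\rhooneton) \cong \mathrm{Harm}^{n-k}(\rhooneton)$ via the Hodge star operator constructed in Section \ref{sec:hodge_star}.

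Chaining these three isomorphisms,
\begin{align}
    H^k(\rhooneton) \;\cong\; \mathrm{Harm}^k(\rhooneton) \;\cong\; \mathrm{Harm}^{n-k}(\rhooneton) \;\cong\; H^{n-k}(\rhooneton) \, ,
\end{align}
and since isomorphic finite-dimensional vector spaces have equal dimension, one concludes $\dim\!\left(H^k(\rhooneton)\right) = \dim\!\left(H^{n-k}(\rhooneton)\right)$, as claimed. The argument is valid for every $k = 1, \ldots, n-1$, and together with the triviality of $H^0$ and $H^n$ it also confirms the symmetry of the Poincar\'e polynomial (\ref{poincare_defn}) observed in the examples (\ref{example_poincare}).

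There is essentially no remaining obstacle: all of the genuine work has already been discharged by Theorems \ref{bijection} and \ref{hodge_duality}. The only point worth emphasizing is \emph{why} this works for the commutant complex but not the GNS variant — the Hodge isomorphism step relies on finite-dimensionality together with positive-definiteness of (\ref{inner_product_to_hodge}) (so that $(\im(d^{i-1}))^\perp = \ker(\delta^{i-1})$ exactly), while the Hodge duality step relies crucially on the identity $\delta = c_{n,k}(-1)^{n(k+1)}\ast d\ast$ of (\ref{delta_to_stardstar}) and on $\ast\ast = (-1)^{k(n-k)}$, neither of which is available for the restriction (\ref{gns_restriction}). One could alternatively try to argue directly that $\ast$ descends to a well-defined map on cohomology, but routing through harmonic representatives via the two theorems in hand is the cleanest path and requires no further calculation.
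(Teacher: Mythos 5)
Your proposal is correct and follows exactly the route the paper takes: chaining $H^k(\rhooneton) \cong \mathrm{Harm}^k(\rhooneton) \cong \mathrm{Harm}^{n-k}(\rhooneton) \cong H^{n-k}(\rhooneton)$ via Theorems \ref{bijection} and \ref{hodge_duality} and reading off equality of dimensions. Your additional remarks on why the argument fails for the \texttt{GNS} restriction are accurate and match the paper's own discussion, though they are not needed for the proof itself.
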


\begin{proof}

Combining the results of Theorems \ref{bijection} and \ref{hodge_duality} gives the chain of isomorphisms
\begin{align}
    H^k ( \rhooneton ) \cong \mathrm{Harm}^k ( \rhooneton ) \overset{\ast}{\cong} \mathrm{Harm}^{n-k} ( \rhooneton ) \cong H^{n-k} ( \rhooneton ) \, ,
\end{align}
where the symbol $\overset{\ast}{\cong}$ indicates that the isomorphism is supplied by the Hodge star map. Thus the dimensions of the $k$-th and $(n-k)$-th cohomology groups agree.
\end{proof}

The conclusion of Corollary \ref{symmetry_corollary} explains the symmetry property of the Poincar\'e polynomials which was pointed out around equation (\ref{example_poincare}).

\subsubsection*{\ul{\it Further remarks}}

In this work, we have been motivated by the set of analogies between de Rham cohomology and entanglement cohomology that are encoded in the following table.

\begin{center}
\begin{tabular}{||c | c ||} 
 \hline
 de Rham & Entanglement  \\ [0.5ex] 
 \hline\hline
 Manifold $\mathcal{M}$ & Pure state $\psioneton$ \\ 
 \hline
 Dimension $n$ & Number of subsystems $n$ in $\Honeton$ \\
 \hline
 $k$-forms $\omega_k \in \Omega^k$ & Restricted operators on Hilbert spaces $\mathcal{H}_{\ul{i}_1 \ldots \ul{i}_k}$ \\
 \hline
 $d$ built from antisymmetrized $\partial_\mu$ & $d$ built from antisymmetrized $\otimes \, \mathbb{I}$  \\
 \hline
 de Rham cohomology & entanglement cohomology \\ [1ex] 
 \hline
\end{tabular}
\end{center}

Our main results have been to extend this table to include several new rows, including the wedge product $\wedge$, the Hodge star $\ast$, the Hodge inner product $\langle \, \cdot \, , \, \cdot \, \rangle$, the codifferential $\delta$, and the Laplacian $\Delta$. All of these structures appear to behave in precisely the same way on both sides of the correspondence, naturally leading to results like the Hodge isomorphism and Hodge decomposition for entanglement cohomology, and the Hodge duality between $k$-forms and $(n-k)$-forms.

One might ask whether our results were guaranteed \emph{a priori} because of the basic structure of the entanglement cochain complex and the associated chain complex, which we have not discussed in this work, but which is treated in \cite{Mainiero:2019enr}. The answer to this question is negative, since as we mentioned above, one could have instead defined the so-called \texttt{GNS} complex using the alternate restriction (\ref{gns_restriction}). We are now prepared to understand why this definition would \emph{not} lead to a natural analogue of Hodge theory. If one had used this alternate restriction, entanglement $k$-forms would instead have been tuples of operators
\begin{align}\label{omega_other_restriction}
    \omega = \left( \omega_{\ul{I}_1} s_{\ul{I}_1} \, , \, \ldots \, , \, \omega_{\ul{I}_1} s_{\ul{I}_1}  \right) \, ,
\end{align}
with multiplication by support projections on the right but not on the left. Any element of such a tuple can ``absorb'' a multiplication by a support projection on the right while remaining unchanged, since $s_{\ul{I}_i}^2 = s_{\ul{I}_i}$, but is modified under a multiplication by a support projection on the left. However, as the definition of the inner product involves a trace, using the restriction $\big\|$ within the trace has the same effect as using $\big\vert$, since
\begin{align}
    \langle \omega, \eta \rangle &= \tr \left( \left( \omega \wedge \ast \eta \right) \big\|_{\rhooneton} \right) \nonumber \\
    &= \tr \left( \left( \omega \wedge \ast \eta \right) \soneton \right) \nonumber \\
    &= \tr \left( \left( \omega \wedge \ast \eta \right) \soneton \soneton \right) \nonumber \\
    &= \tr \left( \soneton \left( \omega \wedge \ast \eta \right) \soneton \right) \, ,
\end{align}
where we have used $\soneton^2 = \soneton$ and cyclicity of the trace. Therefore, for consistency of the definition of the inner product, the quantity $\omega \wedge \ast \eta $ must be able to ``absorb'' the action of a support projection from either side. But using operators constructed with the restriction (\ref{omega_other_restriction}), compatibility of supports only guarantees that the combination $\omega \wedge \ast \eta $ is unchanged under multiplication by support projections on the right, but not on the left. Therefore, the construction of an inner product fails for this alternate choice of restriction, and Hodge duality no longer holds. This explains why the Poincar\'e polynomials for the \texttt{GNS} complex, defined using the restriction $\big\|$, are not symmetric. This counterexample also demonstrates that the existence of our Hodge theory was not guaranteed simply due to the underlying cochain complex structure of the entanglement cohomology.

\section{Two-Qubit Examples}\label{sec:two_qubit_example}

The formalism we have reviewed and developed in Sections \ref{sec:ent_coho_review} and \ref{sec:hodge} is quite abstract, so it is useful to discuss a concrete example which illustrates the machinery of entanglement cohomology. Let us consider a bipartite Hilbert space $\mathcal{H} = \mathcal{H}_A \otimes \mathcal{H}_B$ where both $\mathcal{H}_A$ and $\mathcal{H}_B$ are 2-dimensional Hilbert spaces (qubits). We write\footnote{In Section \ref{sec:ent_coho_review} we labeled basis states using the integers from $1$ to $n$, rather than $0$ to $n - 1$, but here we adopt the zero-indexed convention to match the standard notation for the computational basis in QI.} the two basis states for $\mathcal{H}_A$ as
\begin{align}
    \ket{0_A} = \begin{bmatrix} 1 \\ 0 \end{bmatrix} \, , \qquad \ket{1_A} = \begin{bmatrix} 0 \\ 1 \end{bmatrix} \, , 
\end{align}
and likewise write $\ket{0_B}$ and $\ket{1_B}$ for the basis elements of $\mathcal{H}_B$. A basis for the tensor product $\mathcal{H}_{AB}$ is therefore formed by the four vectors
\begin{align}
    \ket{0_A 0_B} = \begin{bmatrix} 1 \\ 0 \\ 0 \\ 0 \end{bmatrix} \, , \quad \ket{0_A 1_B} = \begin{bmatrix} 0 \\ 1 \\ 0 \\ 0 \end{bmatrix} \, , \quad \ket{1_A 0_B} = \begin{bmatrix} 0 \\ 0 \\ 1 \\ 0 \end{bmatrix} \, , \quad \ket{1_A 1_B} = \begin{bmatrix} 0 \\ 0 \\ 0 \\ 1 \end{bmatrix} \, .
\end{align}
Let us construct and compare two states in $\mathcal{H}_{AB}$. The first will be written
\begin{align}
    \ket{\psi^{(P)}} = \ket{ 0_A 0_B } \, ,
\end{align}
where the ${}^{(P)}$ is short for ``product'' and where the associated density matrix is
\begin{align}\label{rhoF_defn}
    \rho^{(P)}_{AB} = \ket{\psi^{(P)}} \bra{\psi^{(P)}} = \begin{bmatrix} 1 & 0 & 0 & 0 \\ 0 & 0 & 0 & 0 \\ 0 & 0 & 0 & 0 \\ 0 & 0 & 0 & 0 \end{bmatrix} \, .
\end{align}
The second state we consider is
\begin{align}
    \ket{\psi^{(E)}} = \frac{1}{\sqrt{2}} \left( \ket{0_A 0_B} + \ket{ 1_A 1_B } \right) \, ,
\end{align}
where ${}^{(E)}$ is for ``entangled'' and whose corresponding density matrix ix
\begin{align}\label{rhoE_defn}
    \rho_{AB}^{(E)} = \ket{\psi^{(E)}} \bra{\psi^{(E)}} = \begin{bmatrix} \frac{1}{2} & 0 & 0 & \frac{1}{2} \\ 0 & 0 & 0 & 0 \\ 0 & 0 & 0 & 0 \\ \frac{1}{2} & 0 & 0 & \frac{1}{2} \end{bmatrix} \, .
\end{align}
Both of the density matrices (\ref{rhoF_defn}) and (\ref{rhoE_defn}) are projectors, because they are pure states, so the corresponding support projections are equal to the density matrices themselves:
\begin{align}
    s_{\rho_{AB}^{(P)}} = \rho_{AB}^{(P)} \, , \qquad s_{\rho_{AB}^{(E)}} = \rho_{AB}^{(E)} \, .
\end{align}
In this subsection we will also use the more compact notation
\begin{align}
    s_{AB}^{(P)} = s_{\rho_{AB}^{(P)}} \, , \qquad s_{AB}^{(E)} = s_{\rho_{AB}^{(E)}} \, ,
\end{align}
to avoid nested subscripts and superscripts.

First let us develop some intuition for the restriction maps $\big\vert_{\rho_{AB}^{(P)}}$ and $\big\vert_{\rho_{AB}^{(E)}}$ that project onto the images of these two density matrices. A general operator $\mathcal{O}_{AB}$ acting on $\mathcal{H}_{AB}$ can be expanded in a basis and its components may be represented as a $4 \times 4$ matrix:
\begin{align}
    \mathcal{O}_{AB} = \begin{bmatrix} \mathcal{O}_{11} & \mathcal{O}_{12} & \mathcal{O}_{13} & \mathcal{O}_{14} \\ \mathcal{O}_{21} & \mathcal{O}_{22} & \mathcal{O}_{23} & \mathcal{O}_{24} \\ \mathcal{O}_{31} & \mathcal{O}_{32} & \mathcal{O}_{33} & \mathcal{O}_{34} \\ \mathcal{O}_{41} & \mathcal{O}_{42} & \mathcal{O}_{43} & \mathcal{O}_{44} \end{bmatrix} \, .
\end{align}
The restrictions of such a general operator are
\begin{align}\label{F_restriction}
    \mathcal{O}_{AB} \big\vert_{\rho_{AB}^{(P)}} &= s_{AB}^{(P)} \mathcal{O}_{AB} s_{AB}^{(P)} \nonumber \\
    &= \mathcal{O}_{11} \begin{bmatrix} 1 & 0 & 0 & 0 \\ 0 & 0 & 0 & 0 \\ 0 & 0 & 0 & 0 \\ 0 & 0 & 0 & 0 \end{bmatrix} \, ,
\end{align}
and
\begin{align}\label{E_restriction}
    \mathcal{O}_{AB} \big\vert_{\rho_{AB}^{(E)}} &= s_{AB}^{(E)} \mathcal{O}_{AB} s_{AB}^{(E)} \nonumber \\
    &= \frac{1}{4} \left( \mathcal{O}_{11} + \mathcal{O}_{14} + \mathcal{O}_{41} + \mathcal{O}_{44} \right) \begin{bmatrix} 1 & 0 & 0 & 1 \\ 0 & 0 & 0 & 0 \\ 0 & 0 & 0 & 0 \\ 1 & 0 & 0 & 1 \end{bmatrix} \, ,
\end{align}
We therefore see that both of the restrictions (\ref{F_restriction}) and (\ref{E_restriction}) are sensitive to only a single linear combination of the matrix elements of $\mathcal{O}_{AB}$. This is to be expected, since again both density matrices are rank-one projectors, and thus their images are one-dimensional.

In order to study the entanglement complex, we will also need to consider the reduced density matrices obtained by tracing out one of the subsystems. These are
\begin{align}
    \rho_A^{(P)} &= \tr_B \left( \rho_{AB}^{(P)} \right) = \begin{bmatrix} 1 & 0 \\ 0 & 0 \end{bmatrix} = \tr_A \left( \rho_{AB}^{(P)} \right) = \rho_B^{(P)} \, , \nonumber \\
    \rho_A^{(E)} &= \tr_B \left( \rho_{AB}^{(E)} \right) = \frac{1}{2} \begin{bmatrix} 1 & 0 \\ 0 & 1 \end{bmatrix} = \tr_A \left( \rho_{AB}^{(E)} \right) = \rho_B^{(E)} \, .
\end{align}
The associated support projections will now have different properties, since the reduced density matrix associated with the entangled state is full-rank (in fact it is proportional to the identity), whereas $\rho_A^{(P)} = \rho_B^{(P)}$ is still rank $1$:
\begin{align}
    s_{\rho_A^{(P)}} = \begin{bmatrix} 1 & 0 \\ 0 & 0 \end{bmatrix} = s_{\rho_B^{(P)}} \, , \nonumber \\
    s_{\rho_A^{(E)}} = \begin{bmatrix} 1 & 0 \\ 0 & 1 \end{bmatrix} = s_{\rho_B^{(E)}} \, .
\end{align}
As above, we will abbreviate these support projections using the compact notation $s_A^{(X)}$ for $s_{\rho_A^{(X)}}$ and $s_B^{(X)}$ for $s_{\rho_B^{(X)}}$, where $X$ is either $E$ or $P$.

If we again take a generic operator
\begin{align}
    \mathfrak{o}_A = \begin{bmatrix} \mathfrak{o}_{11} & \mathfrak{o}_{12} \\ \mathfrak{o}_{21} & \mathfrak{o}_{22} \end{bmatrix} \, ,
\end{align}
then
\begin{align}
    \mathfrak{o}_A \big\vert_{\rho_A^{(P)}} &= s_A^{(P)} \mathfrak{o} s_A^{(P)} \nonumber \\
    &= \begin{bmatrix} \mathfrak{o}_{11} & 0 \\ 0 & 0 \end{bmatrix} \, ,
\end{align}
while
\begin{align}
    \mathfrak{o}_A \big\vert_{\rho_A^{(E)}} &= s_A^{(E)} \mathfrak{o} s_A^{(E)} \nonumber \\
    &= \begin{bmatrix} \mathfrak{o}_{11} & \mathfrak{o}_{12} \\ \mathfrak{o}_{21} & \mathfrak{o}_{22} \end{bmatrix} \nonumber \\
    &= \mathfrak{o}_A \, .
\end{align}
Let us now consider the ingredients used to build the commutant complexes associated with our two states, which take the schematic form
\begin{align}
    0 \to \mathbb{C} \xrightarrow{d^{(X)}} \Omega^1 ( \rho_{AB}^{(X)} ) \xrightarrow{d^{(X)}} \Omega^2 ( \rho_{AB}^{(X)} ) \xrightarrow{d^{(X)}} 0 \, ,
\end{align}
where again ${}^{(X)}$ is used as a placeholder for either ${}^{(P)}$ or ${}^{(E)}$. We use the same symbol $d^{(X)}$ for the coboundary operators $d^{(X), 0}$, $d^{(X), 1}$, $d^{(X), 2}$, distinguishing between them based on context. Recall that an element of $\Omega^2 ( \rho_{AB}^{(X)} )$ is simply a $4 \times 4$ matrix which has been restricted by left- and right-multiplying by the appropriate support projection:
\begin{align}
    \omega_2 \in \Omega^2 ( \rho_{AB}^{(X)} ) \; \implies \; \omega_2 = s_{AB}^{(X)} \mathcal{O}_{AB} s_{AB}^{(X)} \, \text{ for some } \mathcal{O}_{AB} \, .
\end{align}
That is, any entanglement $2$-form is a $4 \times 4$ matrix of either the form (\ref{F_restriction}) or (\ref{E_restriction}), depending on whether we are studying $\rho_{AB}^{(P)}$ or $\rho_{AB}^{(E)}$. Thus in either case $\Omega^2 ( \rho_{AB}^{(X)} )$ is a one-dimensional vector space.

On the other hand, an element $\omega_1 \in \Omega^1 \left( \rho_{AB}^{(X)} \right)$ is a tuple of $2 \times 2$ matrices:
\begin{align}
    \omega_1 \in \Omega^1 ( \rho_{AB}^{(X)} ) \; \implies \; \omega_1 = \left( s_{A}^{(X)} \mathcal{O}_{A} s_{A}^{(X)} \, , \, s_{B}^{(X)} \mathcal{O}_{B} s_{B}^{(X)} \right) \, \text{ for some } \mathcal{O}_{A} \, , \mathcal{O}_B \, .
\end{align}
For instance, in the case of $\rho_{AB}^{(P)}$, any entanglement $1$-form is a tuple
\begin{align}\label{factorized_one_form}
    \omega_1 \in \Omega^1 ( \rho_{AB}^{(P)} ) \; \implies \; \omega_1 = \left( \begin{bmatrix} a & 0 \\ 0 & 0 \end{bmatrix} \, , \, \begin{bmatrix} b & 0 \\ 0 & 0 \end{bmatrix} \right) \, ,
\end{align}
for some constants $a$ and $b$, while for $\rho_{AB}^{(E)}$, 
\begin{align}\label{general_entangled_one_form}
    \omega_1 \in \Omega^1 ( \rho_{AB}^{(E)} ) \; \implies \; \omega_1 = \left( \begin{bmatrix} a_{11} & a_{12} \\ a_{21} & a_{22} \end{bmatrix} \, , \, \begin{bmatrix} b_{11} & b_{12} \\ b_{21} & b_{22} \end{bmatrix} \right) \, ,
\end{align}
for some $a_{ij}$ and $b_{ij}$. Thus we see that $\Omega^1 ( \rho_{AB}^{(P)} )$ has dimension $2$ while $\dim \left( \Omega^1 ( \rho_{AB}^{(E)} ) \right) = 8$.

Next consider how the coboundary operators $d^{(X)}$ act in these complexes. The first operator, which acts on complex numbers $\lambda \in \mathbb{C}$, simply sends
\begin{align}
    d^{(X)} : \lambda \to ( \lambda s_A^{(X)} , \lambda s_B^{(X)} ) \, ,
\end{align}
so that for the product state one has
\begin{align}\label{exact_factorized}
    d^{(P)} \lambda = \left( \begin{bmatrix} \lambda & 0 \\ 0 & 0 \end{bmatrix} \, , \, \begin{bmatrix} \lambda & 0 \\ 0 & 0 \end{bmatrix} \right) \, ,
\end{align}
while for the entangled state we see
\begin{align}\label{exact_entangled}
    d^{(E)} \lambda = \left( \begin{bmatrix} \lambda & 0 \\ 0 & \lambda \end{bmatrix} \, , \, \begin{bmatrix} \lambda & 0 \\ 0 & \lambda \end{bmatrix} \right) \, .
\end{align}
Therefore, in both complexes there is a one-dimensional space of exact entanglement one-forms. In fact, this feature is generic, since in \emph{any} entanglement complex the image of the first coboundary operator acting on $\mathbb{C}$ is necessarily one-dimensional. Speaking loosely, one might say that entanglement one-forms of the type (\ref{exact_factorized}) and (\ref{exact_entangled}) are ``pure gauge'' in the complexes associated with the product and entangled states, respectively.

Next let us consider the second coboundary operator, $d : \Omega^1 \to \Omega^2$, which maps entanglement one-forms to entanglement two-forms as
\begin{align}
    d^{(X)} : \left( \mathcal{O}_A , \mathcal{O}_B \right) \to \left( \mathbb{I}_A \otimes \mathcal{O}_B - \mathcal{O}_A \otimes \mathbb{I}_B \right) \big\vert_{\rho_{AB}^{(X)}} \, .
\end{align}
Note that, by the definition of $\Omega^1$, the input forms $\mathcal{O}_A$ and $\mathcal{O}_B$ have already been restricted so that $\mathcal{O}_A = \mathcal{O}_A \big\vert_{\rho_A^{(X)}}$ and $\mathcal{O}_B = \mathcal{O}_B \big\vert_{\rho_B^{(X)}}$, so we do not write the bars explicitly.

It is straightforward to check that, for the product state,
\begin{align}
    d^{(P)} \left( \begin{bmatrix} a & 0 \\ 0 & 0 \end{bmatrix} \, , \,  \begin{bmatrix} b & 0 \\ 0 & 0 \end{bmatrix} \right) = \left( b - a \right) \cdot \begin{bmatrix} 1 & 0 & 0 & 0 \\ 0 & 0 & 0 & 0 \\ 0 & 0 & 0 & 0 \\ 0 & 0 & 0 & 0 \end{bmatrix} \, ,
\end{align}
which means that an entanglement one-form (\ref{factorized_one_form}) in the $\rho^{(P)}_{AB}$ complex is closed if and only if $a = b$. But for a one-form with $a = b$, we see from equation (\ref{exact_factorized}) that this form is also exact. This means that the kernel of $d^{(P)} : \Omega^1 \left( \rho_{AB}^{(P)} \right) \to \Omega^2 \left( \rho_{AB}^{(P)} \right)$ and the image of $d^{(P)} : \mathbb{C} \to \Omega^1 \left( \rho_{AB}^{(P)} \right)$ coincide, and thus for the product state, 
\begin{align}
    H^1 \left( \rho_{AB}^{(P)} \right) = \frac{\ker \left( d^{(P)} : \Omega^1 \left( \rho_{AB}^{(P)} \right) \to \Omega^2 \left( \rho_{AB}^{(P)} \right) \right)}{\mathrm{im} \left( d^{(P)} : \mathbb{C} \to \Omega^1 \left( \rho_{AB}^{(P)} \right) \right) } = \left\{ 0 \right\} \, ,
\end{align}
so the cohomology group $H^1 \left( \rho_{AB}^{(P)} \right)$ is trivial. This is just as we would expect, since the cohomology of the entanglement complex is (of course) supposed to measure entanglement, and the product state $\rho_{AB}^{(P)}$ has no entanglement by definition.

We may repeat this exercise for the entangled state. When acting on a generic entanglement one-form (\ref{general_entangled_one_form}), the output of the $d^{(E)}$ operation is
\begin{align}
    d^{(E)} \left( \begin{bmatrix} a_{11} & a_{12} \\ a_{21} & a_{22} \end{bmatrix} \, , \, \begin{bmatrix} b_{11} & b_{12} \\ b_{21} & b_{22} \end{bmatrix} \right) = \left( \frac{b_{11} - a_{11}}{4} + \frac{b_{22} - a_{22}}{4} \right) \cdot \begin{bmatrix} 1 & 0 & 0 & 1 \\ 0 & 0 & 0 & 0 \\ 0 & 0 & 0 & 0 \\ 1 & 0 & 0 & 1 \end{bmatrix} \, ,
\end{align}
which vanishes if and only if
\begin{align}\label{vanish_condition}
    \frac{b_{11} - a_{11}}{4} + \frac{b_{22} - a_{22}}{4} = 0 \, .
\end{align}
Equation (\ref{vanish_condition}) is one linear condition, which fixes one of the $8$ free parameters that determine an entanglement one-form associated with the entangled state in terms of the other parameters. This leaves a $7$-dimensional space of closed $1$-forms, which is to be modded out by the $1$-dimensional space of exact $1$-forms. We therefore expect the dimension of the cohomology to be
\begin{align}
    \dim \left( H^1 \left( \rho_{AB}^{(E)} \right) \right) = \dim \left( \frac{\ker \left( d^{(E)} : \Omega^1 \left( \rho_{AB}^{(E)} \right) \to \Omega^2 \left( \rho_{AB}^{(E)} \right) \right)}{\mathrm{im} \left( d^{(E)} : \mathbb{C} \to \Omega^1 \left( \rho_{AB}^{(E)} \right) \right) } \right) = 6 \, .
\end{align}
Indeed, it is not difficult to find representatives of these six cohomology classes. One can check that the entanglement one-forms
\begin{gather}
    \left( \begin{bmatrix} 0 & 0 \\ 0 & 0 \end{bmatrix} \, , \,  \begin{bmatrix} 1 & 0 \\ 0 & -1 \end{bmatrix} \right) \, , \quad \left( \begin{bmatrix} 1 & 0 \\ 0 & -1 \end{bmatrix} \, , \, \begin{bmatrix} 0 & 0 \\ 0 & 0 \end{bmatrix} \right) \, , \nonumber \\
    \left( \begin{bmatrix} 0 & 0 \\ 0 & 0 \end{bmatrix} \, , \,  \begin{bmatrix} 0 & 0 \\ 1 & 0 \end{bmatrix} \right) \, , \quad \left( \begin{bmatrix} 0 & 0 \\ 1 & 0 \end{bmatrix} \, , \, \begin{bmatrix} 0 & 0 \\ 0 & 0 \end{bmatrix} \right) \, , \nonumber \\
    \left( \begin{bmatrix} 0 & 0 \\ 0 & 0 \end{bmatrix} \, , \,  \begin{bmatrix} 0 & 1 \\ 0 & 0 \end{bmatrix} \right) \, , \quad \left( \begin{bmatrix} 0 & 1 \\ 0 & 0 \end{bmatrix} \, , \, \begin{bmatrix} 0 & 0 \\ 0 & 0 \end{bmatrix} \right) \, , \label{representatives}
\end{gather}
are all closed under the $d^{(E)}$ operator and are ``gauge-inequivalent'' in the sense that no pair of the forms (\ref{representatives}) differ by an exact entanglement $1$-form.

Finally, let us discuss some Hodge-theoretic aspects of these examples. We begin with the product state. Given a generic $\omega_1 \in \Omega^1 ( \rho_{AB}^{(P)} )$, its Hodge dual is
\begin{align}
    \omega_1 = \left( \begin{bmatrix} a & 0 \\ 0 & 0 \end{bmatrix} \, , \, \begin{bmatrix} b & 0 \\ 0 & 0 \end{bmatrix} \right) \; \implies \; \ast \omega_1 = \left( \begin{bmatrix} - b^\ast & 0 \\ 0 & 0 \end{bmatrix} \, , \, \begin{bmatrix} a^\ast & 0 \\ 0 & 0 \end{bmatrix} \right) \, ,
\end{align}
where $a^\ast$, $b^\ast$ are the complex conjugates of $a$ and $b$. The norm induced by the Hodge inner product is simply
\begin{align}
    \langle \omega_1, \omega_1 \rangle = | a |^2 + | b |^2 \, ,
\end{align}
which is manifestly positive-definite.

More generally, given a second one-form $\eta_1 \in \Omega^1 ( \rho_{AB}^{(P)} )$,
\begin{align}
    \eta_1 = \left( \begin{bmatrix} c & 0 \\ 0 & 0 \end{bmatrix} \, , \, \begin{bmatrix} d & 0 \\ 0 & 0 \end{bmatrix} \right) \, , 
\end{align}
one finds
\begin{align}
    \langle \omega_1, \eta_1 \rangle = a c^\ast + b d^\ast \, .
\end{align}
Let us now see the analogous expressions for the entangled state $\rho_{AB}^{(E)}$. Fix two one-forms $\omega_1, \eta_1 \in \Omega^1 ( \rho_{AB}^{(E)} )$ with expansions
\begin{align}
    \omega_1 = \left( \begin{bmatrix} a_{11} & a_{12} \\ a_{21} & a_{22} \end{bmatrix} \, , \, \begin{bmatrix} b_{11} & b_{12} \\ b_{21} & b_{22} \end{bmatrix} \right) \, , \qquad \eta_1 = \left( \begin{bmatrix} c_{11} & c_{12} \\ c_{21} & c_{22} \end{bmatrix} \, , \, \begin{bmatrix} d_{11} & d_{12} \\ d_{21} & d_{22} \end{bmatrix} \right) \, .
\end{align}
The Hodge star in the $\rho_{AB}^{(E)}$ complex acts as
\begin{align}
    \ast \omega_1 = \left( \begin{bmatrix} - b_{11}^\ast & - b_{12}^\ast \\ - b_{21}^\ast & -b_{22}^\ast \end{bmatrix} \, , \, \begin{bmatrix} a_{11}^\ast & a_{12}^\ast \\ a_{21}^\ast & a_{22}^\ast \end{bmatrix} \right) \, .
\end{align}
The inner product is
\begin{align}
    \langle \omega_1 , \eta_1 \rangle = \frac{1}{2} \left( \begin{bmatrix} a_{11} & a_{12} \\ a_{21} & a_{22} \end{bmatrix} \cdot \begin{bmatrix} c_{11} & c_{12} \\ c_{21} & c_{22} \end{bmatrix}^\dagger + \begin{bmatrix} b_{11} & b_{12} \\ b_{21} & b_{22} \end{bmatrix} \cdot \begin{bmatrix} d_{11} & d_{12} \\ d_{21} & d_{22} \end{bmatrix}^\dagger \right) \, ,
\end{align}
and in particular we see that $\langle \omega_1, \omega_1 \rangle = \frac{1}{2} \tr \left( A A^\dagger + B B^\dagger \right)$ where $A = \begin{bmatrix} a_{11} & a_{12} \\ a_{21} & a_{22} \end{bmatrix}$, $B = \begin{bmatrix} b_{11} & b_{12} \\ b_{21} & b_{22} \end{bmatrix}$, which is again positive-definite as expected. This form of the inner product for the $\rho_{AB}^{(E)}$ complex is a consequence of the fact that the two Schmidt coefficients $\lambda_\alpha$ for this state are equal, so the inner product collapses as in equation (\ref{inner_product_same_lambdas}).

\section{Conclusion}\label{sec:conclusion}

In this work, we have explored the use of homological tools to understand entanglement in finite-dimensional quantum systems, extending the analysis of \cite{Mainiero:2019enr}. After reviewing the construction of a cochain complex associated with a generic pure state in any finite-dimensional, multi-partite Hilbert space, we have developed a Hodge theory for this cochain complex. In particular, we defined notions of inner product, codifferential, and Laplacian on entanglement $k$-forms, and proved analogues of the Hodge isomorphism theorem and Hodge decomposition for entanglement cohomology. To do this, we constructed a Hodge star operator which maps entanglement $k$-forms to entanglement $(n-k)$ forms, in a way which sends harmonic forms to harmonic forms, which proves that the dimensions of entanglement cohomologies enjoy a symmetry property. These observations identify and explain new patterns in the mathematical structure of the entanglement complex, which may be useful for understanding types of multi-partite entanglement.

There remain several interesting directions for future inquiry. One of the most obvious is to investigate whether an analogue of the machinery of entanglement cohomology -- and its Hodge-theoretic extension considered here -- applies in quantum field theory, where the Hilbert space is infinite-dimensional and does not admit a conventional tensor product structure. Some initial comments about this generalization already appeared in \cite{Mainiero:2019enr}.

Let us outline a few other future directions below.

\subsubsection*{\ul{\it Mixed states and reflected cohomology}}

Our discussion has focused on pure states and their entanglement. In the original work \cite{Mainiero:2019enr}, entanglement cohomology was also applied to mixed states, where it was shown that cohomological data is related to a rather weak condition dubbed ``support factorizability'' in such mixed states. An important future direction is to investigate whether more fine-grained information about mixed states can also be extracted using homological tools.

One possible strategy for doing this is ``going to the church of the larger Hilbert space'' in the sense that any mixed state involving a finite number of subsystems is equivalent, by the process of purification, to a pure state in a multipartite system with a larger number of subsystems. For instance, one could envision taking any mixed state described by a density matrix $\rho$, constructing its canonical purification $\ket{\sqrt{\rho}}$, and then assembling the dimensions of cohomologies for both this pure states and all of its reduced subsystems. It would be interesting to investigate whether this procedure provides enough information to give some classification of possible entanglement structures in mixed states.

Let us note that, in the case of a mixed state in a bipartite system $\mathcal{H}_{AB}$, the purification procedure yields a pure state in a larger Hilbert space $\mathcal{H}_{A A^\ast B B^\ast}$. The collection of all dimensions of cohomologies for reduced systems therefore contains data about entanglement in subsystems like $\mathcal{H}_{A A^\ast}$ and $\mathcal{H}_{B B^\ast}$. By analogy with reflected entropy \cite{Dutta:2019gen}, it seems natural to refer to this structure as ``reflected cohomology.'' An exciting future direction is to study the properties of such reflected cohomologies in general mixed states.

\subsubsection*{\ul{\it Connection to ``generative effects''}}

The motivation for the approach taken in this work is that cohomology gives a natural language for discussing an obstruction from lifting local properties to global properties, such as realizing a state in a ``global'' tensor product Hilbert space as a tensor product of states in ``local'' subsystem Hilbert spaces. Said differently, cohomology gives a mechanism for characterizing the extent to which ``the whole is greater than the sum of its parts'' in the sense that additional phenomena, such as entangled states, emerge in a composite system despite being absent in any of the component subsystems.

Another framework for analyzing qualitatively similar phenomena is that of \emph{generative effects}, which were introduced in \cite{Adam2017SystemsGA} and are nicely reviewed in Chapter 1 of \cite{fong2018sevensketchescompositionalityinvitation}. Such generative effects can be defined in quite general categories, but for our purposes, it suffices to restrict to the setting of preorders, which are sets equipped with a comparison operation $\leq$ that is symmetric and transitive.  Recall that the \emph{join} of a collection of elements in a preorder is, roughly speaking, their least upper bound, and the \emph{meet} of a collection of elements is roughly their greatest lower bound (the join and meet precisely coincide with the notions of supremum and infimum if the preorder is also a \emph{total order}). Furthermore, a monotone map between preorders is a function with the property that, if $x \leq y$, then $f ( x ) \leq f ( y )$. We say that a monotone map $f$ has a \emph{generative effect} if it does not preserve joins, that is, if there exists at least one pair of elements $a$, $b$ such that
\begin{align}
    f ( a ) \vee f ( b ) \not\cong f ( a \vee b ) \, ,
\end{align}
where we say $x \cong y$ if $x \leq y$ and $y \leq x$ in the preorder.

Suppose that we interpret such a function as a measurement or observation of some collection of systems. Then a function with generative effects, morally speaking, exhibits additional structure when applied to composite systems which is not captured by combining observations applied to individual subsystems. Such a scenario models new effects that occur due to the interconnections between subsystems, much like the existence of entangled states in tensor product Hilbert spaces.

This schematic connection between generative effects and entanglement can likely be made precise in several different ways. An interesting direction for future research is to see whether this construction, or other ways of presenting entanglement as generative effects, also lead to useful classification schemes or physical insights. The original work \cite{Adam2017SystemsGA} presented a quite general framework for building cohomologies associated with generative effects, and it may be that one such cohomology coincides with the notion of entanglement cohomology considered here and in \cite{Mainiero:2019enr}. Furthermore, a more sophisticated version of this construction (perhaps a functor between categories which does not preserve colimits, rather than a monotone map that does not preserve joins) might capture even richer information about entanglement.

\subsubsection*{\ul{\it Seeking structure through machine learning}}

The focus of this work has been on the mathematical structure of entanglement cohomology rather than on a systematic numerical investigation, although we have implemented routines for various operations on entanglement complexes in the Python programming language, building on the QuTiP library \cite{Johansson:2011jer,Johansson:2012qtx}.\footnote{Our Python library will be made publicly available at a future time.} An advantage of having access to such a Python implementation is that it facilitates interfacing with standard libraries for data science and machine learning, including scikit-learn, NumPy, SciPy, and others.

Just as we have discovered and explained one particular pattern in the dimensions of entanglement cohomologies in this work -- namely, the symmetry property of the Poincar\'e polynomials -- one might hope that a data-driven exploratory analysis might reveal still other patterns and structures that might likewise be explained mathematically. To this end, one might use machine learning and artificial intelligence techniques for conjecture generation, a strategy which has been successfully applied to several other problems; see \cite{Gukov:2024buj} and references therein for an introduction.

\section*{Acknowledgements}

We thank Ning Bao, Christopher Beasley, Keiichiro Furuya, James Halverson, Sarah Harrison, Veronika Hubeny, Ziming Ji, Frederic Jia, Yikun Jiang, Joydeep Naskar, Mukund Rangamani, Fabian Ruehle, Benjamin Sung, and Julio Virrueta for helpful comments and discussions related to this work. We are also grateful to Tom Mainiero, Colin Nancarrow, and Eugene Tang for feedback on a draft of this manuscript.
C.\,F. is supported by DE-SC0009999, funds from the University of California, and the National Science Foundation under Cooperative Agreement PHY-2019786 (the NSF AI Institute for Artificial Intelligence and Fundamental Interactions).

\appendix

\section{Compatibility of Supports}\label{app:compatibility}

In this Appendix, we provide a simple proof of the compatibility condition (\ref{compatibility_bipartite}) in the case of a pure state on a bipartite system.

Given a state $\ket{\psi_{AB}} \in \mathcal{H}_{AB} = \mathcal{H}_A \otimes \mathcal{H}_B$, we perform a Schmidt decomposition
\begin{align}
    \ket{ \psi_{AB} } = \sum_{\alpha = 1}^{S} \lambda_\alpha \ket{ \alpha_{A} } \otimes \ket{ \alpha_{B} } \, ,
\end{align}
where $S$ is the Schmidt rank of $\ket{ \psi_{AB} }$. The associated density matrix is
\begin{align}
    \rho_{AB} &= \ket{\psi_{AB}} \bra{\psi_{AB}} \nonumber \\
    &= \sum_{\alpha, \beta = 1}^{S} \lambda_\alpha \lambda_\beta \left( \ket{ \alpha_A } \otimes \ket{\alpha_B } \right) \left( \bra{\beta_A} \otimes \bra{\beta_B} \right) \, ,
\end{align}
and since $\rho_{AB}$ is manifestly a rank-$1$ projection operator, we have $\rho_{AB} = s_{AB}$.

The reduced density matrices are
\begin{align}
    \rho_A = \sum_{\alpha = 1}^{S} \lambda_\alpha^2 \ket{\alpha_A} \bra{\alpha_A} \, , \qquad  \rho_B = \sum_{\alpha = 1}^{S} \lambda_\alpha^2 \ket{\alpha_B} \bra{\alpha_B} \, .
\end{align}
As the bases $\ket{\alpha_A}$ and $\ket{\alpha_B}$ are orthonormal, the corresponding support projections are
\begin{align}
    s_A = \sum_{\alpha = 1}^{S} \ket{\alpha_A} \bra{\alpha_A} \, , \qquad s_B = \sum_{\alpha = 1}^{S} \ket{\alpha_B} \bra{\alpha_B} \, .
\end{align}
Now consider the combination
\begin{align}
    &\left( s_A \otimes s_B \right) s_{AB} \nonumber \\
    &\quad = \left( \left( \sum_{\alpha = 1}^{S} \ket{\alpha_A} \bra{\alpha_A} \right) \otimes \left( \sum_{\beta = 1}^{S} \ket{\beta_B} \bra{\beta_B} \right) \right) \left( \sum_{\gamma, \delta = 1}^{S} \lambda_\gamma \lambda_\delta \left( \ket{ \gamma_A } \otimes \ket{\gamma_B } \right) \left( \bra{\delta_A} \otimes \bra{\delta_B} \right)  \right) \nonumber \\
    &\quad = \sum_{\alpha, \beta, \gamma, \delta = 1}^{S} \lambda_\gamma \lambda_\delta \left( \ket{\alpha_A} \braket{\alpha_A}{\gamma_A} \bra{\delta_A} \right) \otimes \left( \ket{\beta_B} \braket{\beta_B}{\gamma_B} \bra{\delta_B} \right) \nonumber \\
    &\quad = \sum_{\alpha, \beta, \gamma, \delta = 1}^{S} \lambda_\gamma \lambda_\delta \left( \ket{\alpha_A} \delta_{\alpha \gamma} \bra{\delta_A} \right) \otimes \left( \ket{\beta_B} \delta_{\beta \gamma} \bra{\delta_B} \right) \nonumber \\
    &\quad = \sum_{\gamma, \delta = 1}^{S} \lambda_\gamma \lambda_\delta \ket{\gamma_A} \bra{\delta_A} \otimes \ket{\gamma_B} \bra{\delta_B} \nonumber \\
    &\quad = s_{AB} \, .
\end{align}
Here the symbols $\delta_{\alpha \gamma}$ and $\delta_{\beta \gamma}$ represent the Kronecker delta and are not to be confused with the index $\delta$. By an almost identical sequence of steps, one has
\begin{align}
    &s_{AB} \left( s_A \otimes s_B \right)  \nonumber \\
    &\quad = \left( \sum_{\gamma, \delta = 1}^{S} \lambda_\gamma \lambda_\delta \left( \ket{ \gamma_A } \otimes \ket{\gamma_B } \right) \left( \bra{\delta_A} \otimes \bra{\delta_B} \right)  \right)  \left( \left( \sum_{\alpha = 1}^{S} \ket{\alpha_A} \bra{\alpha_A} \right) \otimes \left( \sum_{\beta = 1}^{S} \ket{\beta_B} \bra{\beta_B} \right) \right) \nonumber \\
    &\quad = \sum_{\alpha, \beta, \gamma, \delta = 1}^{S} \lambda_\gamma \lambda_\delta \left( \ket{\gamma_A} \braket{\delta_A}{\alpha_A} \bra{\alpha_A} \right) \otimes \left( \ket{\gamma_B} \braket{\delta_B}{\beta_B} \bra{\beta_B} \right)\nonumber \\
    &\quad = \sum_{\alpha, \beta, \gamma, \delta = 1}^{S} \lambda_\gamma \lambda_\delta \left( \ket{\gamma_A} \delta_{\delta \alpha} \bra{\alpha_A} \right) \otimes \left( \ket{\gamma_B} \delta_{\delta \beta} \bra{\beta_B} \right)\nonumber \\
    &\quad = \sum_{\gamma, \delta = 1}^{S} \lambda_\gamma \lambda_\delta \left( \ket{\gamma_A} \bra{\delta_A} \right) \otimes \left( \ket{\gamma_B} \bra{\delta_B} \right)\nonumber \\
    &\quad = s_{AB} \, .
\end{align}
We conclude that
\begin{align}
   \left( s_A \otimes s_B \right) s_{AB} = s_{AB} =  s_{AB} \left( s_A \otimes s_B \right) \, ,
\end{align}
which is what we set out to show.

\bibliographystyle{utphys}
\bibliography{master}

\end{document}